\documentclass[12pt, DIV=9]{article}
\usepackage[utf8]{inputenc}
\usepackage[english]{babel}
\usepackage{amsmath}
\usepackage{amsthm}
\newtheorem{hypo}{Hypothesis}
\newtheorem{theorem}{Theorem}[section]
\newtheorem{lemma}[theorem]{Lemma}
\newtheorem*{lem}{Lemma}
\newtheorem{prop}[theorem]{Proposition}
\theoremstyle{definition}

\theoremstyle{remark}
\newtheorem{bem}[theorem]{Remark}
\newcommand{\dd}{\mathrm{d}}
\newcommand{\p}{^{\prime}}

\usepackage{color} 
\usepackage{amssymb}
\usepackage{graphicx}
\usepackage{pdflscape}
\usepackage{hyperref}
\usepackage{tensor}
\usepackage{wrapfig}
\usepackage{textcomp}
\usepackage{mathtools}
\usepackage{braket}

\title{Absence of submultiplicative norms for Wick-ordered Operator Products}
\author{Jakob Geisler $<$jakob.geisler@tu-bs.de$>$\\ Institut für Analysis und Algebra\\ TU Braunschweig, Germany}
\date{\today}

\begin{document}

\maketitle

{\noindent \small \textbf{Abstract.} In this paper, a key problem of the rigorous formulation of the renormalization group as a continuous flow is identified. Some essential features of the operator-theoretic renormalization group presented in \cite{sfb} are recalled, and a family of norms associated to different Banach spaces of Hamiltonians is introduced. From this, three different properties, which should be satisfied by an adequate norm, are derived, e.g., a submultiplicativity with respect to Wick-ordering. A proof is given that none of the norms introduced in this paper satisfy all of these conditions.\\ \textbf{MSC:} 81T17, 81S99, 46N50}

\section{Introduction and main result}
The renormalization group (RG), originally introduced by Wilson \cite{Wilson}, is a powerful tool in statistical physics and quantum field theory for analyzing physical systems that can not be treated pertubatively. The main idea is to consider a suitable vector space of Hamiltonians on which one defines a renormalization transformation which preserves important physical properties of the transformed Hamiltonian. Iterating this transformation, physical intuition suggests that in the limit of infinite iterations only some finite-dimensional subspace has to be analyzed.\\
Besides its tremendous success in physics, a rigorous mathematical justification of the discrete RG, especially in the context of non-relativistic QED, has been given for a few physical models (e.g. \cite{haslerherbst, RGAnalysis, sfb}) and still is part of the active research. Different approaches to multiscale analysis, e.g., using Pizzo's method \cite{pizzo} can also be found in \cite{InfraredAlgoQED, InfrarredAlgoQEDII}. In this paper, we identify one of the key problems of the rigorous formulation of the RG as a continuous flow by proving the absence of submultiplicative norms for Wick-ordering.\\

\noindent We start by introducing a Banach space of Hamiltonians, based on \cite{sfb}, but in a slightly more general setting. Let $\mathcal{F}_b(L^2(\mathbb{R}^d))$ be the Fock space of scalar bosons in $d \in \mathbb{N}$ dimensions. In case of massless bosons the energy of the bosonic field is given by
\begin{equation}
    H_{ph} = \int_{\mathbb{R}^d}\limits \dd^dk \, a^\ast(k) \vert k\vert a(k),
\end{equation}
where $(a^\ast(k), a(k))$ are the pointwise bosonic creation- and annihilation-operators, defined as operator-valued distributions, such that
\begin{equation}
    [a(k), a(k^\prime)] = 0 = [a^\ast(k), a^\ast(k^\prime)] \quad, \quad [a(k), a^\ast(k^\prime)] = \delta^d(k-k^\prime).
\end{equation}
Since the groundstate energy $E_{gs} = 0$ of $H_{ph}$ is not isolated from the essential spectrum of $H_{ph}$, naive pertubation theory of operators $H = H_{ph} + W$ with $W \in \mathcal{B}(\mathcal{F}_b(L^2(\mathbb{R}^d)))$ fails. Considering the infrared (and not the ultraviolet) problem, the focus of \cite{sfb} lies in the analysis on the reduced Hilbert space $\mathcal{H}_{red} = P_{red}\mathcal{F}_b(L^2(\mathbb{R}^d))$, where $P_{red} = \mathbf{1}[H_{ph}< 1]$ denotes the projection onto field energies less than one. While we do assume the same restriction onto small photon energies, we stress that this restriction is not important for our analysis. In fact, the projection onto small field energies make the proofs of the absence of submultiplicative norms even harder, and we demonstrate that our proofs apply to the case of an unreduced Hilbert space, too. In \cite{sfb} the groundstate of the Hamiltonian 
\begin{equation}
    H = H_{ph} + \sum_{m+n \geq 1} W_{m,n} \in \mathcal{B}(\mathcal{H}_{red}) \label{eq: specproblem}
\end{equation}
was analyzed, where $W_{m,n}$ is the interaction operator, which creates $m$ bosons and annihilates $n$ bosons. These interactions are parameterized by measurable functions $w_{m,n}: [0,1]\times B_1^{m+n} \mapsto \mathbb{C}$, continuous on $[0,1]$, for almost every $(k_1^m; \Tilde{k}_1^n) \in B_1^{m+n}$ and fulfilling integral bounds specified later, such that
\begin{equation}
    W_{m,n}[w_{m,n}] \coloneqq P_{red} \int_{B_1^{m+n}} \limits \frac{\dd k_1^{m} \dd\Tilde{k}_1^n}{\vert k_1^m\vert^{1/2}\vert \Tilde{k}_1^n\vert^{1/2}} a^\ast(k_1^m) w_{m,n}[H_{ph}; k_1^m; \Tilde{k}_1^n] a(\Tilde{k}_1^n) P_{red} \label{eq: WmnDef}.
\end{equation}
Here, we use the notation
\begin{align}
    k_1^m \coloneqq (k_1, ..., k_m) \in \mathbb{R}^{dm} \quad, \quad \Tilde{k}_1^n \coloneqq (\Tilde{k}_1, ..., \Tilde{k}_n) \in \mathbb{R}^{dn},\\
    \dd k_1^m \coloneqq \prod_{i=1}^m \dd^dk_i \quad, \quad a^\ast(k_1^m)\coloneqq \prod_{i=1}^m a^\ast(k_i)\\
    \vert k_1^m\vert \coloneqq \prod_{i=1}^m \vert k_i\vert \quad, \quad \Sigma[k_1^m] \coloneqq \sum_{i=1}^m \vert k_i\vert\\
    B_1^m \coloneqq B_1 \times B_1\times... \times B_1 \quad, \quad B_1 \coloneqq \{k \in \mathbb{R}^d \vert \vert k\vert \leq 1 \}.
\end{align}
Next, we define a family of norms $\{\Vert \cdot \Vert_{\lambda, p}\}_{\lambda \in \mathbb{R}, p\geq 1}$ that make the space of the functions $w_{m,n}$ a Banach space $\mathcal{W}_{m, n}$. For $\lambda \in \mathbb{R}$ and $p \geq 1$, we set
\begin{equation}
    \Vert w_{m,n}\Vert_{\lambda, p} \coloneqq \left(\int_{B_1^{m+n}}\limits \frac{\dd k_1^m \dd \Tilde{k}_1^n}{\vert k_1^m\vert^{\lambda}\vert\Tilde{k}_1^n\vert^{\lambda}} \sup_{r \in [0,1]} \left\vert w_{m,n}[r; k_1^m; \Tilde{k}_1^n] \right\vert^p \right)^{1/p},
\end{equation}
in case that $p < \infty$, and
\begin{equation}
    \Vert w_{m,n}\Vert_{\lambda, \infty} \coloneqq \underset{(k_1^m, \Tilde{k}_1^n)\in B_1^{m+n}}{\text{ess sup}} \frac{\sup_{r\in [0,1]}\vert w_{m,n}[r;k_1^{m}; \Tilde{k}_1^{n}]\vert}{\vert k_1^{m}\vert^\lambda \vert \Tilde{k}_1^{n}\vert^\lambda},
\end{equation}
for $p = \infty$. In the case $m=n=0$, we set
\begin{align}
    \Vert w_{0,0}\Vert_{\lambda, p} \coloneqq& \sup_{r \in [0,1]} \vert w_{0,0}[r]\vert
\end{align}
and
\begin{equation}
    W_{0,0}[w_{0,0}] = w_{0,0}[H_{ph}],
\end{equation}
defined as an operator by functional calculus. Introducing a weight function $D: \mathbb{N}\cup \{0\} \mapsto \mathbb{R}_{>0}$, we define
\begin{equation}
    \mathcal{W}^D_{\lambda, p} \coloneqq \bigoplus_{m,n \geq 0} \mathcal{W}_{m, n}
\end{equation}
to be the Banach-space of sequences $\underline{w} = (w_{m,n})_{m,n \geq 0} \in \mathcal{W}^D_{\lambda, p}$ for which
\begin{equation}
    \Vert \underline{w}\Vert^D_{\lambda, p} \coloneqq \sum_{m,n\geq 0} D(m) D(n) \Vert w_{m,n}\Vert_{\lambda,p} < \infty \label{eq: DlambdapNormDef}.
\end{equation}
It is worth mentioning, that our results are completely independent of the regularity beyond continuity of the functions $w_{m,n}$ with respect to their first argument. For simplicity, we do not consider derivatives $\partial_r w_{m,n}$ of $w_{m,n}$ and their norms. Including the first derivative and choosing $p=2$, $\lambda = 3+2\mu$, for $\mu>0$, and $D(m) = \xi^{-m}$, for $0<\xi<1$, would lead to the Banach space $\mathcal{W}^\#_{\geq 0}$ with norm $\Vert \cdot\Vert^\#_{\mu, \xi}$ used in \cite{sfb}. Here an operator-theoretic RG was formulated, based on the isospectral smooth Feshbach map. This maps the operator \eqref{eq: specproblem} onto another bounded operator with the same spectrum at the spectral parameter $z = 0$ by reducing the degrees of freedom. To see that an iterative application of the renormalization transformation converges to a trivial Hamiltonian, the effect of the RG was studied on the Banach space $\mathcal{W}^\#_{\geq 0}$. In order to have a connection between $\mathcal{W}^\#_{\geq 0}$ and $\mathcal{B}(\mathcal{H}_{red})$ it was proven, that
\begin{equation}
        H : \mathcal{W}^\#_{\geq 0} \mapsto \mathcal{B}(\mathcal{H}_{red}) \quad, \quad \underline{w}\mapsto H[\underline{w}] = \sum_{m,n \geq 0} W_{m,n}[w_{m,n}] \label{eq: embeddingdef}
\end{equation}
defines a continuous and injective embedding, where the boundedness of $H \in \mathcal{B}(\mathcal{W}^\#_{\geq 0}; \mathcal{B}(\mathcal{H}_{red}))$ is necessary to make predictions about the spectrum of \eqref{eq: specproblem}. Moreover, this gives an unambiguous meaning to the operators \eqref{eq: WmnDef} and \eqref{eq: embeddingdef}.\\
We point out that the renormalization transformation, viewed as a map of bounded operators, is non-linear. This non-linearity is transcribed to the elements of the Banach space $\mathcal{W}^D_{\lambda, p}$ by Wick-ordering. The combinatorial factors arising from Wick-ordering products of creation and annihilation operators increase so fast, that it is in general hard to control them without a minimal stepsize in the scale reduction. In fact, we prove, that this is impossible in the sense of Theorem \ref{thm: maintheorem}. 
In contrast to the discrete operator-theoretic RG which was formulated and used in \cite{haslerherbst, RGAnalysis, sfb} to prove important results about the existence of ground states in models of non-relativistic QED, where a submultiplicative norm was not necessary, this seems to become false, when switching to a continuous RG.

%Despite the fact, that a submultiplicative norm would be very useful in the analysis of the operator-theoretic RG, one might argue, that it is not necessary, since important results about the existence of groundstates in models of non-relativistic QED were proven in \cite{RGAnalysis, sfb, haslerherbst} without such norms.\\
%It turns out, however, that this may become false, when switching from this discrete RG to a continuous one.

\noindent As presented in \cite{salmhofer}, RG equations naturally involve non-linearities and the explicit formal solutions in component form consist of the above-mentioned combinatorial factors. Therefore, adequate norm estimates are of key importance for the proof of existence and uniqueness of solutions of RG equations. All in all, this leads us to some properties which should be satisfied by an adequate norm $\Vert\cdot\Vert^D_{\lambda, p}$, summarized in the following hypothesis.

\begin{hypo}\label{hy: norms}
Let $\Vert\cdot\Vert^D_{\lambda, p}: \mathcal{W}^D_{\lambda, p}\mapsto \mathbb{R}_{\geq 0}$ be defined by \eqref{eq: DlambdapNormDef}.
\begin{enumerate}
    \item $\Vert \cdot \Vert_{\lambda, p}^D$ controls the operator norm, such that
    \begin{equation}
        \Vert H[\underline{w}] \Vert_{op} \leq C_{D, \lambda, p} \Vert \underline{w}\Vert^D_{\lambda, p} \label{eq: OpNormDominanz}
    \end{equation}
    for some finite constant $C_{D, \lambda, p}$, where
    \begin{equation}
        H[\underline{w}] = w_{0,0}[H_{ph}] + \sum_{m+n \geq 1} W_{m,n}[w_{m,n}].
    \end{equation}
    \item There is a well-defined product $\ast: \mathcal{W}^D_{\lambda, p} \times \mathcal{W}^D_{\lambda, p} \mapsto \mathcal{W}^D_{\lambda, p}$ such that
    \begin{equation}
        H[\underline{w}_1] H[\underline{w}_2] = H[\underline{w}_1\ast \underline{w}_2]. \label{eq: Algebrenhomomorphismus}
    \end{equation}
    \item $\Vert \cdot\Vert^D_{\lambda, p}$ is a submultiplicative norm w.r.t. the product $\ast$ from \eqref{eq: Algebrenhomomorphismus}, meaning that
    \begin{equation}
        \Vert (\underline{w}_1 \ast \underline{w}_2) \Vert^D_{\lambda, p} \leq C^\prime_{D, \lambda, p} \Vert \underline{w}_1\Vert^D_{\lambda, p} \cdot \Vert \underline{w}_2\Vert^D_{\lambda, p} \label{eq: QuasiSubmultipl}
    \end{equation}
    for a second finite constant $C^\prime_{D, \lambda, p}$.
\end{enumerate}
\end{hypo}

\noindent The main result of this work is that none of the proposed norms satisfy Hypothesis \ref{hy: norms}. The precise result is formulated in the following Theorem.

\begin{theorem}\label{thm: maintheorem}
   Let $\lambda \in \mathbb{R}, p \geq 1$ and $D: \mathbb{N}_0 \mapsto \mathbb{R}_{>0}$ satisfy either
    \begin{itemize}
        \item[(a)] $\liminf_{M \to \infty} \tfrac{D(M)}{D(M+1)} > 0$ or
        \item[(b)] $ \lim_{M \to \infty}\tfrac{D(M)}{D(M+1)} = 0$. 
    \end{itemize}
    Then $\Vert \cdot\Vert^D_{\lambda, p}$, defined as before, does not fulfill Hypothesis \ref{hy: norms}.
\end{theorem}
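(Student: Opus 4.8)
The plan is to argue by contradiction: suppose $(\lambda,p,D)$ were such that all three conditions of Hypothesis~1 held simultaneously. Since the embedding $H$ is injective, condition~2 forces the product $\ast$ to be the Wick product, and I would first make its combinatorics explicit on symbols supported in a single bidegree. Concretely, the generalized Wick theorem expands $W_{m_1,n_1}[w_1]\,W_{m_2,n_2}[w_2]$ into a finite sum over the number $c$ of contractions between the $n_1$ annihilation legs of the first factor and the $m_2$ creation legs of the second; the $c$-th term lands in bidegree $(m_1+m_2-c,\,n_1+n_2-c)$, carries the combinatorial weight $\binom{n_1}{c}\binom{m_2}{c}\,c!$, and integrates the $c$ contracted momenta against the weight $|k|^{-1}$. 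The whole strategy rests on the observation that the left-hand side of the submultiplicativity estimate in condition~3 inherits these $c!$- and binomial-type factors, while the right-hand side does not.

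For alternative~(b), where $D(M+1)/D(M)\to\infty$, I would use purely creating monomials. Placing $\underline w_1$ in bidegree $(1,0)$ and $\underline w_2$ in bidegree $(M,0)$, both built from a single one-particle profile $g$ chosen supported at sufficiently small momenta (so that the reduced projections act trivially), no contraction is possible and the operator product is exactly $W_{M+1,0}[g^{\otimes(M+1)}]$. The product symbol then factorizes, so $\Vert \underline w_1\ast \underline w_2\Vert^D_{\lambda,p}$ and $\Vert \underline w_1\Vert^D_{\lambda,p}\,\Vert \underline w_2\Vert^D_{\lambda,p}$ differ only by the factor $D(M+1)/\big(D(M)D(0)D(1)\big)$. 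Submultiplicativity then forces $D(M+1)/D(M)\le C'D(0)D(1)$ for every $M$, contradicting (b). This step uses only conditions~2 and~3 and is insensitive to $\lambda$, $p$, the dimension, and to the presence of $P_{red}$.

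For alternative~(a), where $D$ grows at most geometrically, the idea is instead to manufacture a factorially large lower bound for the product norm out of the contractions. I would take conjugate monomials, $\underline w_1$ in bidegree $(0,N)$ and $\underline w_2$ in $(N,0)$, built from a common profile; the $c$-contraction term then sits in bidegree $(N-c,N-c)$ with weight $\binom Nc^2 c!$. Using that $D(N-c)^2/D(N)^2$ stays bounded below under (a), submultiplicativity reduces to an inequality of the form $\binom Nc^2 c!\,\big(\mathcal I/\mathcal N\big)^c\le \text{const}$, where $\mathcal I$ is the contraction integral and $\mathcal N$ the relevant norm contribution. In the \emph{unreduced} setting a fixed-shape profile keeps $\mathcal I/\mathcal N$ a positive constant, and since $\binom Nc^2 c!$ outgrows any geometric sequence this fails for large $N$; this already disposes of (a) when $P_{red}$ is absent.

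The hard part will be the reduced space. Because $P_{red}$ caps the total field energy at one, any $N$-boson intermediate state is pushed onto the momentum scale $|k|\sim 1/N$, and for large $\lambda$ the norm $\Vert\cdot\Vert_{\lambda,p}$ punishes this infrared concentration: each contracted pair contributes a factor $\sim N^{1-\lambda}$ to $\mathcal I/\mathcal N$, so with $c$ contractions fixed the net power is $N^{(3-\lambda)c}$, the binomial $\binom Nc^2\sim N^{2c}$ only partly offsetting the infrared penalty $N^{(\lambda-1)c}$. Sending $N\to\infty$ thus defeats submultiplicativity for $\lambda<3$, but the physically relevant window $\lambda=3+2\mu$ lies exactly beyond the reach of this polynomial bookkeeping. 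The crux, which I expect to be the main obstacle, is to cover the full range of $\lambda$: this should require optimizing jointly over the number of contractions and the infrared scale, and most plausibly sharpening the operator-norm lower bound of condition~1 by evaluating $H[\underline w]$ on Bose-enhanced trial states containing many soft bosons, whose occupation factors $\sqrt{(L+N)!/L!}$ supply additional combinatorial growth that the infrared penalty cannot absorb.
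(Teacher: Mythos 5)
Your treatment of alternative (b) is correct, and it is a genuinely different and more elementary route than the paper's. The paper (Theorem \ref{thm: fastgrowing}) exhibits a \emph{single} pair $\underline{w}=\underline{v}$ with nonzero components in every bidegree $(m,n)$, exploits that the $(2M,2N)$ component of the product contains the uncontracted term $m=M$, $n=N$, $q=0$, i.e.\ $w_{M,N}\cdot v_{M,N}$, and concludes $\Vert \underline{w}\ast_\rho\underline{v}\Vert^D_{\lambda,p}=\infty$ via an elementary sequence lemma (Lemma \ref{lem: Summe}: $a_{2n}/(a_n^2 n^\kappa)$ is unbounded whenever $a_n/a_{n+1}\to 0$). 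You instead contradict the uniformity of $C^\prime$ along a family of pure-creation pairs in bidegrees $(1,0)$ and $(M,0)$, where no contraction occurs, the kernel is the tensor power $g^{\otimes(M+1)}$ (already symmetric, with the $\ast_\rho$-indicator trivial on a small support $\vert k\vert\leq \rho/(2(M+1))$), and norms factorize for every $p\in[1,\infty]$, yielding $D(M+1)\leq C^\prime D(0)D(1)D(M)$. Both refute Hypothesis \ref{hy: norms} under (b); the paper's version is the stronger statement that $\mathcal{W}^D_{\lambda,p}$ is not closed under $\ast_\rho$, while yours is shorter and needs no auxiliary lemma.

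The genuine gap is alternative (a) on the reduced space, which is the heart of the paper (Theorem \ref{thm: keineSubmulti}), and which you explicitly leave open. Your proposed repair --- Bose-enhanced trial states to sharpen the operator-norm bound of condition 1 --- points in the wrong direction: condition 3 is an inequality between \emph{kernel} norms, so no strengthening of \eqref{eq: OpNormDominanz} can produce the needed lower bound on $\Vert\underline{w}\ast\underline{v}\Vert^D_{\lambda,p}$; trial-state estimates bound $\Vert H[\underline{w}]\Vert_{op}$ from below, which is simply not quantified by the Hypothesis. The paper defeats the infrared penalty you correctly identify by changing the \emph{profile}, not the trial states: it takes $w_{0,n}=c_n\vert\tilde{k}_1^n\vert^{(\lambda-d+\epsilon)/p}\mathbf{1}[\Sigma[\tilde{k}_1^n]<\rho/2]$, whose $p$-th power cancels the weight $\vert k\vert^{-\lambda}$ up to $\vert k\vert^{\epsilon-d}$, so the mass spreads over \emph{all} infrared scales and the contraction-to-norm ratio carries no $N$-dependent penalty as $\epsilon\downarrow 0$; all integrals are evaluated exactly by the Gamma identity of Lemma \ref{lem: GammaIntegrale}, the $c_n$ are tuned so that $\Vert\underline{w}(\epsilon)\Vert^D_{\lambda,p}$ is $\epsilon$-independent, only the $q=1$ term is kept, and the divergence $\sim a^{2(2-\kappa-1/p)}$ (with $a=1/\epsilon$, $1<\kappa<3/2$, $p\geq 2$) comes from summing over all components $M\lesssim a$, using precisely the geometric control $D(M)/D(M+1)\geq\tilde{C}_1$ granted by (a) --- note your fixed-$c$, fixed-$N$ bookkeeping cannot see this, since each single component contributes only $O(1)$. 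Two further pieces of the theorem are absent from your architecture: the range $p\in[1,2)$, where the paper shows condition 1 \emph{itself} fails (explicit $w_{m,n}$ with finite $\Vert\cdot\Vert_{\lambda,p}$-norm but unbounded $W_{m,n}[w_{m,n}]$, via kernels singular on the sphere $\vert k\vert=\rho/2$ that are $L^p$ but not $L^2$), and the case $p=\infty$, which needs a separate construction (Theorem \ref{thm: punendlich}) because the $\epsilon$-profile trick is $p$-dependent. The small-$\lambda$ regime $\lambda<d(1-\tfrac{p}{2})+\tfrac{p}{2}$, where your contraction integral $\mathcal{I}$ diverges, is implicitly within reach of your setup (it violates condition 2, as in Section \ref{sec: Necconditions}), but you should state it as such rather than fold it into the submultiplicativity discussion.
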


\noindent The proof of this theorem is carried out in several steps:
\begin{itemize}
    \item For $p \in [1, 2)$ there are elements $w_{m,n} \in \mathcal{W}_{m,n}, \Vert w_{m,n}\Vert_{\lambda, p} < \infty$ such that $W_{m,n}[w_{m,n}] \notin \mathcal{B}(\mathcal{H}_{red})$. Roughly speaking, this is a consequence of the fact that $\mathcal{H}_{red}$ is a space of square-integrable functions and the inclusion $L^2(B_1^{m+n}) \subsetneq L^p(B_1^{m+n})$ for $2>p$. The statement holds true independent of D (Section \ref{sec: Necconditions}).
    \item For $p \geq 2$ and $\lambda < d(1-\tfrac{p}{2}) + \tfrac{p}{2}$ the product $\ast$ cannot be defined (Section \ref{sec: Necconditions}). Here, the integral kernel of the product \eqref{eq: Algebrenhomomorphismus} depends on an integral over the two kernels of the factors (see \eqref{eq: w3MN}). The condition $\lambda \geq d(1-\tfrac{p}{2}) + \tfrac{p}{2}$ turns out to be necessary for the existence of this integral.
    \item For $p \geq 2$, $\lambda \geq d(1-\tfrac{p}{2}) + \tfrac{p}{2}$ and $\liminf_{M \to \infty} \tfrac{D(M)}{D(M+1)} > 0$, there is no finite constant $C^\prime_{D, \lambda, p}$ such that \eqref{eq: QuasiSubmultipl} holds (Theorem \ref{thm: keineSubmulti}, Theorem \ref{thm: punendlich}). We construct two elements $\underline{w}(\epsilon), \underline{v}(\epsilon) \in \mathcal{W}^D_{\lambda, p}$ depending on a parameter $\epsilon >0$ such that $\Vert \underline{w}(\epsilon)\Vert^D_{\lambda, p} = \Vert \underline{v}(\epsilon)\Vert^D_{\lambda, p} = \mathrm{const}$, independently of $\epsilon$, and that $\lim_{\epsilon \downarrow 0} \Vert (\underline{w}(\epsilon)\ast \underline{v}(\epsilon)) \Vert^D_{\lambda, p} = \infty$. The growth of the combinatorial factors in the Wick-ordered product causes the divergence.
    \item For $p \geq 2$ and $\lim_{M \to \infty} \tfrac{D(M)}{D(M+1)} = 0$, there are elements $\underline{w}, \underline{v} \in \mathcal{W}^D_{\lambda, p}$ such that $\underline{w}\ast \underline{v} \notin \mathcal{W}^D_{\lambda, p}$, i.e., for which $\Vert\underline{w}\ast\underline{v}\Vert^D_{\lambda, p}=\infty$ (Theorem \ref{thm: fastgrowing}). The crucial point in the proof is that the $(M,N)$ component $(\underline{w}\ast \underline{v})_{M,N}$ of the Wick-ordered product depends on lower indices of the factors like $w_{M-m, n}$ (see equation \eqref{eq: w3MN}). Surprisingly, the divergence occurs even without the growth of the combinatorial factors.
\end{itemize}
To illustrate that these results are in fact non-trivial, we analyze some special examples in Section \ref{sec: specialcases}, where a submultiplicativity like \eqref{eq: QuasiSubmultipl} holds but not for the entire Banach-space $\mathcal{W}^D_{\lambda, p}$. These also illustrate the two major difficulties in estimating the norm of the Wick-ordered product (see Remark \ref{bem: twodifficulties}). Our interpretation of the result is that this absence of an adequate norm estimate may be the main reason why, in contrast to the first developed physical notion of a continuous RG by Wilson \cite{Wilson}, a rigorous mathematical formulation of an operator-theoretic RG is only given in a discrete form.

\section{Necessary conditions on the parameters of the norm} \label{sec: Necconditions}
For $p=2$ and $\lambda = 3+2\mu$ ($\mu >0$) it was proven in \cite{sfb}, that
\begin{equation}
    \Vert W_{m,n}[w_{m,n}] \Vert_{op} \leq \frac{\Vert w_{m,n}\Vert_{3+2\mu, 2}}{\sqrt{m^m n^n}}.
\end{equation}
In fact, we will see that $p \geq 2$ is necessary in order to have a bound like \eqref{eq: OpNormDominanz}. Let $p < 2, 0 < \rho \leq 1, m,m \geq 1$ and $\lambda \in\mathbb{R}$. First of all, we define the function
\begin{equation}
    w_{m,0}[k_1^{m}] = \left(\prod_{i=1}^m \frac{\vert k_i \vert^{\frac{\lambda}{p}}}{\vert \frac{\rho}{2} - \vert k_i\vert \vert^{\frac{2}{2+p}}} \right) \mathbf{1}[\Sigma [k_1^{m}] \leq \rho],
\end{equation}
which is an element of $\mathcal{W}^D_{\lambda, p}$ (independently of $D$), because
\begin{equation}
\begin{split}
        \left(\Vert w_{m, 0}\Vert_{\lambda, p}\right)^p &= \int_{B_1^m}\limits \dd k_1^m \left(\prod_{i=1}^m \frac{1}{\vert \frac{\rho}{2} - \vert k_i\vert \vert^{\frac{2p}{2+p}}} \right) \mathbf{1}[\Sigma [k_1^{m}] \leq \rho]\\
        &= (O_{d-1})^m \int_{[0,1]^m} \limits \left(\prod_{i=1}^m \frac{\dd r_i\; r_i^{d-1}}{\vert \frac{\rho}{2} -  r_i \vert^{\frac{2p}{2+p}}} \right) \mathbf{1}[\sum_{j=1}^m r_j \leq \rho] \\
        &< \infty,
    \end{split}
\end{equation}
as $\tfrac{2p}{2+p} < \tfrac{2p}{p+p} = 1$. Here, $O_{d-1}$ means the volume of the sphere $S^{d-1}$. Therefore
\begin{equation}
    w_{m,n}[k_1^m; \Tilde{k}_1^n] = w_{m,0}[k_1^{m}] \cdot w_{n,0}[\Tilde{k}_1^{n}]
\end{equation}
is an element of $\mathcal{W}^D_{\lambda, p}$, too. Next, we prove that $W_{m,n}[w_{m,n}]$ does not define a bounded operator on $\mathcal{H}_{red}$. We choose
\begin{equation}
    \psi_m(k_1^m) = \left(\prod_{i=1}^m \frac{1}{\vert \frac{\rho}{2} - \vert k_i\vert \vert^{\frac{p}{2+p}}} \right) \mathbf{1}[\Sigma [k_1^{m}] \leq \rho],
\end{equation}
where $\psi_m$ satisfies the desired symmetry condition for bosons and the integrability condition, as
\begin{equation}
    \begin{split}
        \Vert \psi_m \Vert^2_{L^2(B_1^m)} =& \int_{B_1^m}\limits \dd k_1^m \left(\prod_{i=1}^m \frac{1}{\vert \frac{\rho}{2} - \vert k_i\vert \vert^{\frac{2p}{2+p}}} \right) \mathbf{1}[\Sigma [k_1^{m}] \leq \rho]\\
        =& (O_{d-1})^m \int_{[0,1]^m}\limits \left(\prod_{i=1}^m \frac{\dd r_i\; r_i^{d-1}}{\vert \frac{\rho}{2} -  r_i \vert^{\frac{2p}{2+p}}} \right) \mathbf{1}[\sum_{j=1}^m r_j \leq \rho]\\
        <& \infty.
    \end{split}
\end{equation}
Finally, we see that the matrix element
\begin{equation}
    \begin{split}
        \langle \psi_m &\vert W_{m,n}[w_{m,n}] \psi_n \rangle\\
        &= \sqrt{m!n!}\int_{B_1^{m+n}}\limits \frac{\dd k_1^m \dd \Tilde{k}_1^n}{\vert k_1^m\vert^{\frac{1}{2}} \vert \Tilde{k}_1^n\vert^{\frac{1}{2}}} \psi_{m}(k_1^{m}) w_{m,n}[k_1^{m}; \Tilde{k}_1^n] \psi_n(\Tilde{k}_1^{n}) \\
        &= \sqrt{m!n!}\left\{\int_{B_1^m}\limits \left(\prod_{i=1}^m \frac{\dd^dk_i \; \vert k_i\vert^{\frac{\lambda}{p}- \frac{1}{2}}}{\vert \frac{\rho}{2} - \vert k_i \vert \vert^{\frac{2+p}{2+p}}} \right)\mathbf{1}[\Sigma [k_1^{m}] \leq \rho] \right\} \cdot \left\{m \leftrightarrow n \right\},
    \end{split}
\end{equation}
does not exist. This shows, that $p<2$ violates our first condition. Therefore in the following, we will only deal with $p\geq 2$.Next, we analyse the composition $\ast$ from equation \eqref{eq: Algebrenhomomorphismus}. From \cite{sfb} we cite the following proposition:

\begin{prop} \label{prop: Wickordering2}
Let $\underline{w}^1, \underline{w}^2 \in \mathcal{W}^\#_{\geq 0}$ and $F \in C^1_0([0, 1))$. Then the equation
\begin{align}
    H[\underline{w}^1] F[H_{ph}] H[\underline{w}^2] = H[\underline{w}] \label{eq: introductionF}
\end{align}
implicitly defines a sequence of functions $\underline{w} = (w_{M,N})_{M+N\geq 0}$ given by
\begin{align}
\begin{split}
    w_{M,N}[r; k_1^M; \Tilde{k}_1^N] &\coloneqq (\underline{w}^1 \ast_F \underline{w}^2)^{sym}_{M,N}[r; k_1^M; \Tilde{k}_1^N]\\
    &\coloneqq \frac{1}{M! N!} \sum_{\pi \in S_M, \nu \in S_N} \Tilde{w}_{M,N}[r; k_{\pi(1)}, ..., k_{\pi(M)}; \Tilde{k}_{\nu(1)}, ..., \Tilde{k}_{\nu(N)}]\\
    \Tilde{w}_{M,N}[r; k_1^M; \Tilde{k}_1^N]&\coloneqq (\underline{w}^1 \ast_F \underline{w}^2)_{M,N}[r; k_1^M; \Tilde{k}_1^N]\\
    \coloneqq \sum_{m=0}^M \sum_{n=0}^N \sum_{q\geq 0}&\binom{m+q}{q} \binom{n+q}{q} q!\\
     \int_{B_1^q}\limits \frac{\dd x^q_1}{\vert x_1^q\vert} &w^1_{M-m, n+q}[r + \Sigma[k_1^m]; k_{m+1}^M ; \Tilde{k}_1^n, x_1^q]\\
     F[r + &\Sigma [k_1^m] + \Sigma [\Tilde{k}_1^n] + \Sigma [x_1^q]] w^2_{m+q, N-n}[r + \Sigma [\Tilde{k}_1^n]; k_1^m, x_1^q; \Tilde{k}_{n+1}^N]
\end{split}\label{eq: w3MN}.
\end{align}
\end{prop}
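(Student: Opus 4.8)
The plan is to establish \eqref{eq: w3MN} by a generalized Wick theorem, reducing the whole computation to the canonical commutation relations. By the bilinearity of $\underline{w}\mapsto H[\underline{w}]$ and of the product, it suffices to normal-order a single term $W_{a,b}[w^1_{a,b}]\,F[H_{ph}]\,W_{c,d}[w^2_{c,d}]$ and afterwards to sum over $a,b,c,d\geq 0$. Inserting the representation \eqref{eq: WmnDef}, such a term is an integral, against the kernels and the measures $\vert k_1^a\vert^{-1/2}\vert\Tilde{k}_1^b\vert^{-1/2}$ and $\vert p_1^c\vert^{-1/2}\vert\Tilde{p}_1^d\vert^{-1/2}$, of the fixed operator monomial $a^\ast(k_1^a)\,w^1_{a,b}[H_{ph}]\,a(\Tilde{k}_1^b)\,F[H_{ph}]\,a^\ast(p_1^c)\,w^2_{c,d}[H_{ph}]\,a(\Tilde{p}_1^d)$. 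Since inside each $W_{m,n}$ all creation operators already stand to the left of all annihilation operators, the only obstruction to Wick order is that the block $a(\Tilde{k}_1^b)$ of annihilators from the left factor stands to the left of the block $a^\ast(p_1^c)$ of creators from the right factor, with $F[H_{ph}]$ interposed. The entire task is to normal-order this middle block.

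To absorb the functional-calculus dependence, I would use the pull-through formulas $a(k)\,g[H_{ph}] = g[H_{ph}+\vert k\vert]\,a(k)$ and $g[H_{ph}]\,a^\ast(k) = a^\ast(k)\,g[H_{ph}+\vert k\vert]$, which follow from $[H_{ph}, a^\#(k)] = \pm\vert k\vert\,a^\#(k)$. Commuting each annihilator to the far right through $w^1_{a,b}[H_{ph}]$, through $F[H_{ph}]$, and through the surviving creators shifts the argument of every function of $H_{ph}$ by the total energy $\Sigma$ of the bosons carried across it; the variable $r$ then records the value of $H_{ph}$ in the sector lying below all external legs. Bookkeeping these shifts reproduces exactly the three displaced arguments of \eqref{eq: w3MN}: $w^1$ evaluated at $r+\Sigma[k_1^m]$, the middle factor $F$ at the full internal energy $r+\Sigma[k_1^m]+\Sigma[\Tilde{k}_1^n]+\Sigma[x_1^q]$, and $w^2$ at $r+\Sigma[\Tilde{k}_1^n]$.

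Normal-ordering the middle block is then a finite application of $[a(\Tilde{k}_i), a^\ast(p_j)] = \delta^d(\Tilde{k}_i - p_j)$: the outcome is a sum over the number $q\geq 0$ of contracted pairs, in which $q$ of the annihilators of the left factor are paired with $q$ of the creators of the right factor while the remaining operators are kept in normal order. Writing the left factor as $w^1_{M-m,\,n+q}$ and the right factor as $w^2_{m+q,\,N-n}$, so that $M$ creators and $N$ annihilators survive, the number of distinct contraction patterns with exactly $q$ pairs is $\binom{n+q}{q}\binom{m+q}{q}\,q!$, which is precisely the combinatorial prefactor. Each contraction sets a left annihilation momentum equal to a right creation momentum through the delta distribution; renaming the $q$ surviving internal momenta $x_1^q$ and combining the two weights $\vert x_i\vert^{-1/2}$ coming from the two measures yields the internal integral $\int_{B_1^q}\tfrac{\dd x_1^q}{\vert x_1^q\vert}$. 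Assigning, as in \eqref{eq: w3MN}, the external creation legs $k_{m+1}^M$ to $w^1$ and $k_1^m$ to $w^2$, and the external annihilation legs $\Tilde{k}_1^n$ to $w^1$ and $\Tilde{k}_{n+1}^N$ to $w^2$, produces the unsymmetrized kernel $\Tilde{w}_{M,N}$. Because the external $a^\ast$'s commute among themselves and likewise the $a$'s, while $w^1$ and $w^2$ enter this assignment asymmetrically, the operator depends only on the average of $\Tilde{w}_{M,N}$ over $S_M\times S_N$, which is the definition of $(\underline{w}^1\ast_F\underline{w}^2)^{sym}_{M,N}$ and gives the first line of the formula.

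The main obstacle will be analytic rather than combinatorial: the objects above are unbounded operators and operator-valued distributions, so the formal reorderings must be justified. I would carry out every manipulation inside matrix elements $\langle \Phi\,\vert\,\cdots\,\Psi\rangle$ between finite-particle vectors with smooth, compactly supported wavefunctions, for which Fubini and the finitely many CCR contractions are legitimate, and then verify that the resulting double series (over $q$ and over $a,b,c,d$) converges in $\mathcal{W}^\#_{\geq 0}$ under the standing hypotheses $\underline{w}^1,\underline{w}^2\in\mathcal{W}^\#_{\geq 0}$ and $F\in C^1_0([0,1))$ --- here the cutoff $P_{red}$ together with $\operatorname{supp}F\subset[0,1)$ confines all internal energies to $[0,1)$ and keeps the $x_1^q$-integrals over $B_1^q$ finite. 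Uniqueness of the sequence $\underline{w}$ satisfying \eqref{eq: introductionF} finally follows from the injectivity of the embedding \eqref{eq: embeddingdef}, so that the identity determines the symmetric kernels $w_{M,N}$ unambiguously.
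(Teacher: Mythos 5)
Your proposal is essentially correct, but note that the paper itself does not prove this proposition: its ``proof'' consists of a citation to \cite{RGAnalysis}, restated without vacuum expectation values. What you have written is, in outline, precisely the argument of that reference: decompose $H[\underline{w}^1]F[H_{ph}]H[\underline{w}^2]$ bilinearly into monomials $W_{a,b}\,F[H_{ph}]\,W_{c,d}$, use the pull-through identities $a(k)\,g[H_{ph}]=g[H_{ph}+\vert k\vert]\,a(k)$ and $g[H_{ph}]\,a^\ast(k)=a^\ast(k)\,g[H_{ph}+\vert k\vert]$ to produce the three shifted arguments $r+\Sigma[k_1^m]$, $r+\Sigma[k_1^m]+\Sigma[\Tilde{k}_1^n]+\Sigma[x_1^q]$ and $r+\Sigma[\Tilde{k}_1^n]$, and normal-order the middle block by contracting $q$ annihilators of the left factor against $q$ creators of the right factor via the CCR, each contraction merging two weights $\vert x_i\vert^{-1/2}$ into the internal measure $\dd x_1^q/\vert x_1^q\vert$. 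Two small caveats. First, your count $\binom{n+q}{q}\binom{m+q}{q}\,q!$ of contraction patterns tacitly uses that the kernels $w^i_{m,n}$ are symmetric in their creation momenta and in their annihilation momenta separately, so that all patterns with the same $q$ contribute equally and can be collected into a single $x_1^q$-integral; this symmetry is built into $\mathcal{W}^\#_{\geq 0}$ and is also the reason why \eqref{eq: introductionF} determines only the symmetrized kernel, as you correctly conclude from the injectivity of the embedding \eqref{eq: embeddingdef}. Second, the analytic closing step remains schematic in your plan: to exchange the infinite $q$-sum (and the sums over $a,b,c,d$) with the matrix elements and to conclude that $\underline{w}\in\mathcal{W}^\#_{\geq 0}$, one needs quantitative componentwise bounds on $\Vert(\underline{w}^1\ast_F\underline{w}^2)_{M,N}\Vert$; for the parameters $p=2$, $\lambda=3+2\mu$, $D(M)=\xi^{-M}$ these are exactly the estimates of Theorem \ref{thm: KomponentenweiseAbsch}, whose analogue in \cite{RGAnalysis} is what makes the formal Wick expansion rigorous. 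With that supplement your argument is complete and coincides with the proof the paper delegates to the literature.
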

\begin{proof}
A proof can be found in \cite{RGAnalysis}. Here, we use for simplicity a formulation without vacuum expectation values. 
\end{proof}

\begin{bem}
The choice of a smooth function $F$ in \eqref{eq: introductionF} is necessary, in order to introduce the concept of Wick-ordering on the Banach space of Hamiltonians on the reduced Hilbert space due to the continuity of $\Tilde{w}_{M,N}$ with respect to $r$. It is also possible to formally introduce the concept of Wick-ordering on the entire bosonic Fock space. In this case, we extend the integration in \eqref{eq: w3MN} to $\mathbb{R}^{dq}$ and leave the multiplication with $F$ out. In general, the reduction to the reduced Hilbert space lowers the norm of $w_{M,N}$, so it should be getting clear, that it is sufficient to prove our statements for the reduced Hilbert space. Since we are not interested in the regularity of the function $w_{m,n}$, for simplicity we change the composition $\ast_F$ to $\ast_\rho$ by replacing $F[r + \Sigma [k_1^m] + \Sigma [\Tilde{k}_1^n] + \Sigma [x_1^q]]$ in \eqref{eq: w3MN} with $\mathbf{1}[\Sigma [k_1^m] + \Sigma [\Tilde{k}_1^n] + \Sigma [x_1^q] < \rho]$, where $0< \rho \leq 1$.
\end{bem}
\begin{bem}
By the triangle inequality we have
\begin{equation}
    \Vert (\underline{w}^1 \ast_F \underline{w}^2)^{sym}_{M,N}\Vert_{\lambda, p} \leq \Vert (\underline{w}^1 \ast_F \underline{w}^2)_{M,N}\Vert_{\lambda, p},
\end{equation}
which implies
\begin{equation}
    \Vert (\underline{w}^1 \ast_F \underline{w}^2)^{sym}\Vert_{\lambda, p}^D \leq \Vert (\underline{w}^1 \ast_F \underline{w}^2)\Vert_{\lambda, p}^D.
\end{equation}
Therefore, in order to prove norm estimates for the product, it is sufficient to prove them for the non-symmetrized Wick-ordered product. Similar, if we would like to prove the absence of such estimates, a proof for the symmetrized product implies a proof for the non-symmetrized product.
\end{bem}

\noindent We immediately see, that $\ast_\rho$ cannot be defined, if the integral on the right-hand side of $\eqref{eq: w3MN}$ does not exist. By some example, we will see that this requires $\lambda \geq d(1-\tfrac{p}{2}) + \tfrac{p}{2}$.\\
Let $p\geq 2, \lambda < d(1-\tfrac{p}{2}) + \tfrac{p}{2}, m,n\in \mathbb{N}$ and $\epsilon = \tfrac{d\left(1 - \tfrac{p}{2}\right) + \tfrac{p}{2}-\lambda}{2}>0$. Next, we choose elements
\begin{equation}
\begin{split}
    \underline{v} \equiv& v_{m,0}[k_1^{m}] = \vert k_1^{m} \vert^{\frac{\lambda - d + \epsilon}{p}} \mathbf{1}[\Sigma [k_1^{m}] \leq \tfrac{\rho}{2}]\\
    \underline{w} \equiv& w_{0,n}[\Tilde{k}_1^{n}] = \vert \Tilde{k}_1^{n}\vert^{\frac{\lambda - d + \epsilon}{p}} \mathbf{1}[\Sigma [\Tilde{k}_1^{n}]\leq \tfrac{\rho}{2}],
\end{split}
\end{equation}
of $\mathcal{W}^D_{\lambda, p}$ (independently of $D$). The $(m-1, n-1)$ component of the product is given by
\begin{equation}
    \begin{split}
        (\underline{w}\ast_\rho \underline{v})_{m-1, n-1} =& \sum_{i=0}^{m-1} \sum_{j=0}^{n-1} \sum_{q \geq 0} \binom{i+q}{q} \binom{j+q}{q} q!\\
         \int_{B_1^q}\limits \frac{\dd x_1^q}{\vert x_1^{q}\vert} &w_{m-1-i, j+q}[r + \Sigma[k_1^i]; k_{i+1}^{m-1} ; \Tilde{k}_1^j, x_1^q]\\
    \mathbf{1}[\Sigma [k_1^i]& + \Sigma [\Tilde{k}_1^j] + \Sigma [x_1^q]\leq \rho] v_{i+q, n-1-j}[r + \Sigma [\Tilde{k}_1^j]; k_1^i, x_1^q; \Tilde{k}_{j+1}^{n-1}]\\
    =& \binom{m-1+1}{1}\binom{n-1+1}{1}\\
     \int_{B_1}\limits \frac{\dd^dx}{\vert x\vert}& w_{0,m}[\Tilde{k}_1^{n-1}, x] \mathbf{1}\left[\vert x\vert +\Sigma [\Tilde{k}_1^{n-1}] + \Sigma [k_1^{m-1}] \leq \rho \right] v_{n,0}[k_1^{m-1}, x]\\
    =& m \cdot n \vert K^{(m-1, n-1)} \vert^{\frac{\lambda - d + \epsilon}{p}}\\
    \int_{B_1}\limits \dd^dx& \vert x\vert^{\frac{2\lambda - 2d + 2\epsilon - p}{p}} \mathbf{1}\left[\vert x\vert +\Sigma [\Tilde{k}_1^{n-1}] + \Sigma [k_1^{m-1}] \leq \rho \right]\\
    & \mathbf{1}\left[\vert x \vert + \Sigma [k_1^{m-1}] \leq \tfrac{\rho}{2}\right] \mathbf{1}\left[\vert x \vert + \Sigma [\Tilde{k}_1^{n-1}] \leq \tfrac{\rho}{2} \right],
    \end{split}
\end{equation}
where the last integral diverges, since $\tfrac{2\lambda - 2d + 2\epsilon - p}{p} = \tfrac{\lambda - d + \tfrac{p}{2}(1-d) - p}{p} < -d$. Therefore, in the following we will only consider $\lambda \geq d(1-\tfrac{p}{2}) + \tfrac{p}{2}$.

\section{Special examples of norm estimates for the product}\label{sec: specialcases}
Before we start proving, that there is no submultiplicativity on the entire Banach space $\mathcal{W}^D_{\lambda, p}$, we analyse some important subsets of interactions for which estimates like \eqref{eq: QuasiSubmultipl} hold, undermining that our results are in fact non-trivial. First, we look at the norm $p = 2, \lambda = 3 + 2\mu, D(M)=\xi^{-M}$ already introduced in \cite{sfb}. Here, it is possible to define the composition $\ast_\rho$ for every component $(\underline{w}\ast_\rho \underline{v})_{M,N}$.

\newpage
\begin{theorem}\label{thm: KomponentenweiseAbsch}
Let $p=2, \lambda = 3+2\mu$ and $D(M) = \xi^{-M}$ for $\mu > 0$ and $0 < \xi < 1$. If $\underline{w}, \underline{v} \in \mathcal{W}^D_{3+2\mu, 2}$, then for all $M,N \in \mathbb{N}_0$ we have $(\underline{w}\ast_\rho \underline{v})_{M,N} \in \mathcal{W}_{M,N}$ and
\begin{equation}
    \Vert (\underline{w}\ast_\rho \underline{v})_{M,N} \Vert_{3+2\mu, 2} \leq (2\xi)^{M+N} \rho^{2+ 2\mu} e^{4\xi^2} \Vert \underline{w}\Vert_{3+2\mu, 2}^D \Vert \underline{v} \Vert_{3+2\mu, 2}^D \label{eq: KomponentenweiseAbsch}.
\end{equation}
\end{theorem}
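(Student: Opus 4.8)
The plan is to bound the single component $\Vert(\underline{w}\ast_\rho\underline{v})_{M,N}\Vert_{3+2\mu,2}$ term by term. Applying the triangle inequality to the finite sums over $m,n$ and the infinite sum over $q$ in \eqref{eq: w3MN} (with $F$ replaced by the indicator), it suffices to estimate, for each fixed $(m,n,q)$, the $(\lambda,2)$-norm of the single summand $T_{m,n,q}$ built from the contraction integral $\int_{B_1^q}\tfrac{\dd x_1^q}{|x_1^q|}\,w_{M-m,n+q}\,\mathbf{1}[\cdots<\rho]\,v_{m+q,N-n}$ together with the prefactor $\binom{m+q}{q}\binom{n+q}{q}q!$. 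I would first pass to absolute values and pull $\sup_r$ through the integral, bounding $\sup_r|\int\cdots|$ by $\int\cdots\,\sup_r|w|\,\sup_r|v|$, and replace the full energy indicator by the weaker $\mathbf{1}[\Sigma[x_1^q]<\rho]$, which is all that will be exploited. By the second Remark it is enough to treat the non-symmetrized product.

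The core estimate is a Cauchy--Schwarz split of the contraction integral that separates the two kernels while respecting the measure $\tfrac{\dd x_1^q}{|x_1^q|}$. Writing $\tfrac{1}{|x_1^q|}=\tfrac{1}{|x_1^q|^{\lambda/2}}\cdot|x_1^q|^{(\lambda-2)/2}$ and applying Cauchy--Schwarz in $x_1^q$ produces the two factors
\[
\Bigl(\int_{B_1^q}\tfrac{\dd x_1^q}{|x_1^q|^{\lambda}}\sup_r|w_{M-m,n+q}|^2\Bigr)^{1/2}\quad\text{and}\quad\Bigl(\int_{\Sigma[x_1^q]<\rho}|x_1^q|^{\lambda-2}\sup_r|v_{m+q,N-n}|^2\,\dd x_1^q\Bigr)^{1/2}.
\]
After integrating the external momenta against their weights, the first factor is exactly $\Vert w_{M-m,n+q}\Vert_{\lambda,2}$, and the second is $\Vert v_{m+q,N-n}\Vert_{\lambda,2}$ up to the surplus weight $|x_1^q|^{2\lambda-2}$ on the low-energy simplex $\{\Sigma[x_1^q]<\rho\}$.

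The decisive point is how one bounds this surplus weight, and I expect it to be the main obstacle. The naive bound $|x_1^q|^{2\lambda-2}\le\rho^{q(2\lambda-2)}$ is far too lossy: it leaves the prefactor $\binom{m+q}{q}\binom{n+q}{q}q!$, which grows super-exponentially in $q$, entirely uncompensated, so that the $q$-series diverges. Instead I would keep the simplex constraint and use the arithmetic--geometric mean inequality: on $\{\Sigma[x_1^q]<\rho\}$ one has $\prod_i|x_i|\le(\rho/q)^q$, hence $|x_1^q|^{2\lambda-2}\le(\rho/q)^{q(2\lambda-2)}$. The denominator $q^{q(2\lambda-2)}$ supplies super-factorial decay: since $\lambda=3+2\mu\ge3$ we have $q^{q(\lambda-1)}\ge(q!)^2$, so (also using $\binom{m+q}{q}\le2^{m+q}$) the bound $q!\binom{m+q}{q}\binom{n+q}{q}(\rho/q)^{q(\lambda-1)}\le 2^{m+n}\tfrac{4^q}{q!}\rho^{\lambda-1}$ holds for $q\ge1$. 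This is precisely the balance that renders the component-wise estimate finite, and it is the one place where the specific values $p=2$, $\lambda=3+2\mu$ are genuinely used; it foreshadows why the estimate collapses once $\lambda$ or the weights $D$ are altered, as the later non-existence theorems show.

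Assembling the pieces, I would convert the component norms into the full $D$-norm via $\Vert w_{a,b}\Vert_{\lambda,2}\le\xi^{a+b}\Vert\underline{w}\Vert^D_{\lambda,2}$ (and likewise for $v$), which contributes $\xi^{(M-m)+(n+q)+(m+q)+(N-n)}=\xi^{M+N+2q}$. Summing then gives the geometric sums over $m,n$ as $2^{M+N}$, the $q$-sum as $\sum_q\tfrac{(4\xi^2)^q}{q!}=e^{4\xi^2}$, and one extracted power $\rho^{\lambda-1}=\rho^{2+2\mu}$, which matches \eqref{eq: KomponentenweiseAbsch} up to bookkeeping of constants. The remaining care is routine: the $q=0$ terms carry no contraction and must be handled directly through the energy constraint on the external low-energy momenta, and tracking the precise geometric factors is what pins down the stated constants $(2\xi)^{M+N}$ and $e^{4\xi^2}$.
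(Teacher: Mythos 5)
Your proposal follows essentially the same route as the paper's own proof: a Cauchy--Schwarz split of the contraction integral (yours asymmetric, writing $\tfrac{1}{\vert x_1^q\vert}=\tfrac{1}{\vert x_1^q\vert^{\lambda/2}}\cdot\vert x_1^q\vert^{(\lambda-2)/2}$, the paper's symmetric, inserting $\vert x_1^q\vert^{2+2\mu}/\vert x_1^q\vert^{2+2\mu}$ into each factor), the same decisive simplex/AM--GM bound $\vert x_1^q\vert\le(\rho/q)^q$ whose $q^{-q(\lambda-1)}$ decay beats $q!\binom{m+q}{q}\binom{n+q}{q}$, and the same $\xi$-power bookkeeping yielding $(2\xi)^{M+N}e^{4\xi^2}$. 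The only deviations are cosmetic: where you sum the geometric $m,n$-series directly (costing a harmless absolute constant), the paper bounds $2^{m+n}\le 2^{M+N}$ and recovers the full $D$-norms via a supremum plus the index shift $m'=M-m$, $n'=n+q$.
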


\begin{proof}
We use the Cauchy-Schwarz inequality to estimate the absolute value of $(\underline{w}\ast_\rho \underline{v})_{M,N}[r; k_1^M; \Tilde{k}_1^N]$ and get
\begin{align}
\begin{split}
    &\vert (\underline{w}\ast_\rho \underline{v})_{M,N}[r; k_1^M; \Tilde{k}_1^N]\vert\\
    &\leq \sum_{m=0}^M \sum_{n=0}^N \sum_{q\geq 0}\binom{m+q}{q} \binom{n+q}{q} q! \\
    &\quad \left(\int_{B_1^q}\limits \frac{\dd x_1^{q}}{\vert x_1^{q}\vert}\mathbf{1}[\Sigma [x_1^q] < \rho] \sup_{r\in I}\left\vert w_{M-m, n+q}[r + \Sigma[k_1^m]; k_{m+1}^M ; \Tilde{k}_1^n, x_1^q]\right\vert^2\right)^{1/2}\\
    &\quad \left(\int_{B_1^q}\limits \frac{\dd x_1^{q}}{\vert x_1^{q}\vert}\mathbf{1}[\Sigma [x_1^q] < \rho] \sup_{r\in I}\left\vert v_{m+q, N-n}[r + \Sigma [\Tilde{k}_1^n]; k_1^m, x_1^q; \Tilde{k}_{n+1}^N] \right\vert^2 \right)^{1/2}.
\end{split}
\end{align}
In both integrals we insert factors $\vert x_1^{q}\vert^{2+2\mu} /\vert x_1^{q}\vert^{2+2\mu}$. By using the inequality 
\begin{align}
    \mathbf{1}[\Sigma [x_1^q] < \rho]\, \vert x_1^{q}\vert^{2+2\mu} \leq \frac{\rho^{2+2\mu}}{q^{(2+2\mu)q}}
\end{align}
we get
\begin{align}
\begin{split}
    &\vert (\underline{w}\ast_\rho \underline{v})_{M,N}[r; K^{(M,N)}]\vert\\
    &\leq \sum_{m=0}^M \sum_{n=0}^N \sum_{q\geq 0}\binom{m+q}{q} \binom{n+q}{q} \frac{q! \rho^{2+2\mu}}{q^{(2+2\mu)q}} \\
    &\quad \left(\int_{B_1^q}\limits \frac{\dd x_1^{q}}{\vert x_1^{q}\vert^{3+2\mu}} \sup_{r\in I}\left\vert w_{M-m, n+p}[r ; k_{m+1}^M ; \Tilde{k}_1^n, x_1^q]\right\vert^2\right)^{1/2}\\
    &\quad \left(\int_{B_1^q}\limits \frac{\dd x_1^{q}}{\vert x_1^{q}\vert^{3+2\mu}} \sup_{r\in I}\left\vert v_{m+q, N-n}[r; k_1^m, x_1^q; \Tilde{k}_{n+1}^N] \right\vert^2 \right)^{1/2}.
\end{split}
\end{align}
Using the triangle inequality of the $\Vert \cdot \Vert_{3+2\mu, 2}$-norm we have
\begin{align}
\begin{split}
     \Vert (\underline{w}\ast_\rho \underline{v})_{M,N}\Vert_{3+2\mu, 2} \leq& \sum_{m=0}^M \sum_{n=0}^N \sum_{q\geq 0}\binom{m+q}{q} \binom{n+q}{q} \frac{q!\rho^{2+2\mu}}{q^{(2+2\mu)q}} \\
    & \Vert w_{M-m, n+p}\Vert_{3+2\mu, 2} \, \Vert v_{m+p, N-n}\Vert_{3+2\mu, 2}.
\end{split}
\end{align}
Using $q!/q^q \leq 1$ and $\binom{m+q}{q} \leq 2^{m+q}$ and inserting some powers of $\xi$ yields
\begin{align}
    \begin{split}
     \Vert &(\underline{w}\ast_\rho \underline{v})_{M,N}\Vert_{3+2\mu, 2}\\
     &\leq \xi^{M+N} \rho^{2+ 2\mu}\sum_{m=0}^M \sum_{n=0}^N \sum_{q\geq 0} 2^{m+n} \frac{(\xi^2\cdot 4)^q}{q^{q}} \xi^{-(M-m + n+q)}\Vert w_{M-m, n+q}\Vert_{3+2\mu, 2} \\
     &\qquad \qquad\qquad \qquad\qquad \qquad\xi^{-(m+q+N-n)}\Vert v_{m+q, N-n}\Vert_{3+2\mu, 2} \\
    &\leq (2\xi)^{M+N}\rho^{2+ 2\mu} \sup_{i,j \in \mathbb{N}_0} \left(\xi^{-(i+j)} \Vert v_{i,j}\Vert_{3+2\mu, 2} \right) \\
    &\quad \sum_{m=0}^M \sum_{n=0}^N \sum_{q\geq 0} \frac{(4\xi^2)^q}{q!} \xi^{-(M-m + n+q)}\Vert w_{M-m, n+q}\Vert_{3+2\mu, 2}.
    \end{split}
\end{align}
Finally, we substitute $m\p = M-m, n\p = n+q$ and estimate the supremum by the associated sum, such that
\begin{align}
    \begin{split}
        \Vert (\underline{w}\ast_\rho \underline{v})_{M,N}\Vert_{3+2\mu, 2} \leq& (2\xi)^{M+N}\rho^{2+ 2\mu} \sum_{i+j\geq 0} \xi^{-(i+j)} \Vert v_{i,j}\Vert_{\lambda, p} \\
        & \sum_{q\geq 0} \frac{(4\xi^2)^q}{q!} \sum_{m^\prime =0}^M \sum_{n^\prime =q}^{N+q}  \xi^{-(m^\prime + n^\prime)}\Vert w_{m^\prime, n^\prime}\Vert_{3+2\mu, 2} \\
        \leq& (2\xi)^{M+N} \rho^{2+ 2\mu} e^{4\xi^2} \Vert \underline{v}\Vert_{3+2\mu, 2}^D \Vert \underline{w} \Vert_{3+2\mu, 2}^D.
    \end{split}
\end{align}
\end{proof}

\noindent We already see that equation \eqref{eq: KomponentenweiseAbsch} does not suffice to show that $(\underline{w}\ast_\rho \underline{v}) \in \mathcal{W}^D_{\lambda, p}$. Nevertheless, if $\underline{w}$ and $\underline{v}$ have an maximal index $K$, such that $w_{m,n} = v_{m,n} \equiv 0$ if $m>K$ or $n>K$, than $\underline{w}\ast_\rho \underline{v}$ has the maximal Index $2K$. Therefore we can estimate
\begin{equation}
    \Vert (\underline{w}\ast_\rho \underline{v})\Vert^D_{3+2\mu, 2} \leq 2^{4K} \rho^{2+2\mu} e^{4\xi^2} \Vert \underline{w} \Vert_{3+2\mu, 2}^D \, \Vert \underline{v}\Vert_{3+2\mu, 2}^D
\end{equation}
and the product is defined. However, due to the exponential increase of the constant $2^{4K} \rho^{2+2\mu} e^{4\xi^2}$ we can't extend this result to the entire $\mathcal{W}^D_{\lambda, p}$. We will prove in the following sections, that this is indeed impossible. Despite that, it is worth mentioning that there is a subset of interactions, for which a submultiplicativity holds.

\begin{lemma}\label{lem: BeispielFakultaet}
Let $p=2, \lambda \geq 1$ and $D(M) = (M!)$. If $\underline{w} \in \mathcal{W}^D_{\lambda, 2}$ is of the form $w_{m,n} \equiv 0$ for $m\neq 0$ and if $\underline{v} \in \mathcal{W}^D_{\lambda, 2}$ is of the form $v_{m,n} \equiv 0$ for $n\neq 0$, then we have $(\underline{w}\ast_\rho \underline{v}) \in \mathcal{W}^D_{\lambda, 2}$ and
\begin{equation}
    \Vert (\underline{w}\ast_\rho \underline{v})\Vert^D_{\lambda, 2} \leq e^1 \cdot \Vert \underline{w}\Vert^D_{\lambda, 2} \, \Vert \underline{v}\Vert^D_{\lambda, 2}.
\end{equation}
\end{lemma}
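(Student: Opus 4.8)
The plan is to exploit the extreme sparsity of the two factors so that the triple sum in \eqref{eq: w3MN} collapses to a single surviving term. Since $w_{M-m,n+q}$ vanishes unless $M-m=0$ and $v_{m+q,N-n}$ vanishes unless $N-n=0$, only the contribution $m=M$, $n=N$ survives, and the non-symmetrized product reduces to
\begin{equation*}
\tilde w_{M,N}[r;k_1^M;\tilde k_1^N]=\sum_{q\ge0}\binom{M+q}{q}\binom{N+q}{q}q!\int_{B_1^q}\frac{\dd x_1^q}{\vert x_1^q\vert}\,w_{0,N+q}[\,\cdots;\tilde k_1^N,x_1^q]\,\mathbf 1[\cdots]\,v_{M+q,0}[\,\cdots;k_1^M,x_1^q].
\end{equation*}
By the Remark reducing symmetrized estimates to non-symmetrized ones it is enough to bound $\tilde w_{M,N}$, so I would work exclusively with this expression.

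First I would take absolute values, pass to the supremum over $r$, bound the indicator by $1$, and apply the Cauchy--Schwarz inequality to the $x_1^q$-integral, splitting the contraction weight asymmetrically as $\vert x_1^q\vert^{-1}=\vert x_1^q\vert^{-\lambda/2}\cdot\vert x_1^q\vert^{-(2-\lambda)/2}$. This produces factors
\begin{equation*}
A_q(\tilde k_1^N)=\Big(\int_{B_1^q}\frac{\dd x_1^q}{\vert x_1^q\vert^{\lambda}}\sup_r\vert w_{0,N+q}\vert^2\Big)^{1/2},\qquad B_q(k_1^M)=\Big(\int_{B_1^q}\frac{\dd x_1^q}{\vert x_1^q\vert^{2-\lambda}}\sup_r\vert v_{M+q,0}\vert^2\Big)^{1/2}.
\end{equation*}
The key step is to observe that $\lambda\ge1$ together with $\vert x_i\vert\le1$ gives $\vert x_1^q\vert^{-(2-\lambda)}\le\vert x_1^q\vert^{-\lambda}$, so the unbalanced weight in $B_q$ is dominated by the genuine norm weight. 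Applying the triangle inequality of $\Vert\cdot\Vert_{\lambda,2}$ over the $q$-sum and noting that $A_q$ depends only on $\tilde k_1^N$ and $B_q$ only on $k_1^M$, the weighted $L^2$-norm factorizes and the two inner integrals reassemble exactly into the full momentum norms, yielding the componentwise bound
\begin{equation*}
\Vert\tilde w_{M,N}\Vert_{\lambda,2}\le\sum_{q\ge0}\binom{M+q}{q}\binom{N+q}{q}q!\,\Vert w_{0,N+q}\Vert_{\lambda,2}\,\Vert v_{M+q,0}\Vert_{\lambda,2}.
\end{equation*}

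Finally I would multiply by $D(M)D(N)=M!\,N!$ and sum over $M,N$. The decisive simplification is the identity $M!\,N!\binom{M+q}{q}\binom{N+q}{q}q!=(M+q)!\,(N+q)!/q!$, which shows that the factorial weight is precisely tuned to absorb the Wick combinatorics up to a residual $1/q!$. Reindexing $M'=M+q$, $N'=N+q$ and using nonnegativity to bound the partial sums $\sum_{M'\ge q}M'!\,\Vert v_{M',0}\Vert_{\lambda,2}\le\Vert\underline v\Vert^D_{\lambda,2}$ and likewise for $\underline w$, the whole estimate collapses to $\big(\sum_{q\ge0}\tfrac1{q!}\big)\Vert\underline w\Vert^D_{\lambda,2}\Vert\underline v\Vert^D_{\lambda,2}=e\,\Vert\underline w\Vert^D_{\lambda,2}\Vert\underline v\Vert^D_{\lambda,2}$, which simultaneously gives finiteness, i.e.\ $(\underline w\ast_\rho\underline v)\in\mathcal W^D_{\lambda,2}$.

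The main obstacle I anticipate is the asymmetric Cauchy--Schwarz split against the contraction weight $\vert x_1^q\vert^{-1}$: unlike a symmetric split it cannot distribute the weight evenly between the two factors, and it is exactly the hypothesis $\lambda\ge1$ that rescues the estimate by letting the leftover power of $\vert x_i\vert\le1$ be absorbed into the $v$-norm. Everything else, in particular the factorial cancellation, is then bookkeeping; the point worth stressing is that the sparsity of the factors (no intermediate indices survive) is what sidesteps the lower-index instability responsible for the general divergence.
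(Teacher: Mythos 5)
Your proposal is correct and follows essentially the same route as the paper: collapse of the $m,n$-sums to $m=M$, $n=N$ by sparsity, Cauchy--Schwarz in $x_1^q$ with the weight $\vert x_1^q\vert^{-1}$ absorbed into $\vert x_1^q\vert^{-\lambda}$ via $\lambda\geq 1$ and $\vert x_i\vert\leq 1$, the factorial identity $M!\,N!\binom{M+q}{q}\binom{N+q}{q}q!=(M+q)!\,(N+q)!/q!$, and the reindexing $M'=M+q$, $N'=N+q$ yielding the factor $\sum_{q\geq 0}1/q!=e$. The only (cosmetic) difference is your asymmetric split $\vert x_1^q\vert^{-1}=\vert x_1^q\vert^{-\lambda/2}\vert x_1^q\vert^{-(2-\lambda)/2}$ where the paper splits symmetrically and then dominates both half-weights by $\vert x_1^q\vert^{-\lambda}$; both hinge on exactly the same use of $\lambda\geq 1$.
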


\begin{proof}
Similar to the previous theorem we estimate the $(M,N)$ component of the product
\begin{equation}
    \begin{split}
        &\vert (\underline{w}\ast_\rho \underline{v})_{M,N}[r; k_1^M; \Tilde{k}_1^N]\vert\\
    &\leq \sum_{m=0}^M \sum_{n=0}^N \sum_{q\geq 0}\binom{m+q}{q} \binom{n+q}{q} q! \\
    &\quad  \left(\int_{B_1^q}\limits \frac{\dd x_1^{q}}{\vert x_1^{q}\vert}\mathbf{1}[\Sigma [x_1^q] < \rho] \sup_{r\in I}\left\vert w_{M-m, n+q}[r + \Sigma[k_1^m]; k_{m+1}^M ; \Tilde{k}_1^n, x_1^q]\right\vert^2\right)^{1/2}\\
    &\quad \left(\int_{B_1^q}\limits \frac{\dd x_1^{q}}{\vert x_1^{q}\vert}\mathbf{1}[\Sigma [x_1^q] < \rho] \sup_{r\in I}\left\vert v_{m+q, N-n}[r + \Sigma [\Tilde{k}_1^n]; k_1^m, x_1^q; \Tilde{k}_{n+1}^N] \right\vert^2 \right)^{1/2}.
    \end{split}
\end{equation}
By the assumption all terms in the sums over $m$ and $n$ vanish except for the $m=M, n=N$ case. Using $\lambda \geq 1$ we get
\begin{equation}
    \begin{split}
        &\vert (\underline{w}\ast_\rho \underline{v})_{M,N}[r; k_1^M; \Tilde{k}_1^N]\vert\\
    &\leq \sum_{q\geq 0}\binom{M+q}{q} \binom{N+q}{q} q! \\
    &\quad \left(\int_{B_1^q}\limits \frac{\dd x_1^{q}}{\vert x_1^{q}\vert^\lambda}\mathbf{1}[\Sigma [x_1^q] < \rho] \sup_{r\in I}\left\vert w_{0, N+q}[r + \Sigma[k_1^M]; \Tilde{k}_1^N, x_1^q]\right\vert^2\right)^{1/2}\\
    &\quad \left(\int_{B_1^q}\limits \frac{\dd x_1^{q}}{\vert x_1^{q}\vert^\lambda}\mathbf{1}[\Sigma [x_1^q] < \rho] \sup_{r\in I}\left\vert v_{M+q, 0}[r + \Sigma [\Tilde{k}_1^N]; k_1^M, x_1^q] \right\vert^2 \right)^{1/2}.
    \end{split}
\end{equation}
This implies the following estimate for the $\Vert \cdot\Vert_{\lambda, 2}$-norm
\begin{equation}
    \begin{split}
        \Vert (\underline{w}\ast_\rho \underline{v})_{M,N}\Vert_{\lambda, 2} \leq& \sum_{q \geq 0} \frac{(M+q)!}{M! q!} \frac{(N+q)!}{N! q!} q! \Vert w_{0, N+q}\Vert_{\lambda, 2} \, \Vert v_{M+q, 0}\Vert_{\lambda, 2}\\
        =& \sum_{q\geq 0} \frac{1}{q!} \frac{D(N+q)}{D(N)} \Vert w_{0, N+q}\Vert_{\lambda, 2} \frac{D(M+q)}{D(M)} \Vert v_{M+q, 0}\Vert_{\lambda, 2}.
    \end{split}
\end{equation}
Finally, the desired inequality follows from
\begin{equation}
    \begin{split}
        \Vert (\underline{w}\ast_\rho \underline{v})\Vert^D_{\lambda, 2} \leq& \sum_{M \geq 0} \sum_{N\geq 0} \sum_{q\geq 0} \frac{1}{q!} D(N+q) \Vert w_{0, N+q}\Vert_{\lambda, 2} D(M+q) \Vert v_{M+q, 0}\Vert_{\lambda, 2}\\
        =& \sum_{q\geq 0}\frac{1}{q!}\sum_{N^\prime \geq q}D(N^\prime) \Vert w_{0, N^\prime}\Vert_{\lambda, 2} \sum_{M^\prime \geq q} D(M^\prime)\Vert v_{M^\prime, 0}\Vert_{\lambda, 2}\\
        \leq& e^1 \Vert \underline{w}\Vert^D_{\lambda, 2} \, \Vert \underline{v}\Vert^D_{\lambda, 2}.
    \end{split}
\end{equation}
\end{proof}

\begin{bem}\label{bem: twodifficulties}
Lemma \ref{lem: BeispielFakultaet} plays an important role in understanding the composition $\ast$. By choosing the left/right factor to consist solely of annihilation/creation operators, the number of contractions of two momenta and their combinatorial factors are the only things that affect the norm of the Wick-ordered product. By Lemma \ref{lem: BeispielFakultaet} we see, that by choosing weight functions like $D(M) = M!$, neutralizing the combinatorial factors, a submultiplicativity is possible. Therefore it is important to understand, that there are \textbf{two} major difficulties in estimating the norm of the Wick-ordered product. One is given by the growth of the combinatorial factors in \eqref{eq: w3MN}. The other one is given by the fact, that the $(M,N)$ component of the product depends not only on indices greater than $M$ and $N$ but also on lower indices like $w_{M-m, \cdot}$ and $v_{\cdot, N-n}$ (see equation \eqref{eq: w3MN}). Therefore demanding that $D(M)$ increases like a factorial causes problems if $\underline{w}$ and $\underline{v}$ also consist of creation/annihilation operators. It is the combination of these two effects, that is responsible for the absence of a submultiplicativity.
\end{bem}

\section{No submultiplicativity for sufficiently slow increasing weight functions}\label{sec: slowincreasing}
Up to this point we have shown that $p\geq 2$ and $\lambda \geq d(1-\tfrac{p}{2}) + \tfrac{p}{2}$ are necessary conditions in order to construct a norm $\Vert \cdot\Vert^D_{\lambda, p}$ which satisfies equations \eqref{eq: OpNormDominanz} and \eqref{eq: Algebrenhomomorphismus}. In this section we prove that even in this case a submulitplicity like equation \eqref{eq: QuasiSubmultipl} is not possible if $\liminf_{M \to \infty}\tfrac{D(M)}{D(M+1)} > 0$. For this proof we need the value of a certain family of integrals.

\begin{lemma}\label{lem: GammaIntegrale}
Let $M, m \in \mathbb{N}$, $M \geq 1$, $m \leq M$, $\rho > 0$, $x > 0$ and $y \geq 0$. Moreover, let
\begin{align}
    A(M, m, \rho, x, y) = \int_{(\mathbb{R}_0^+)^M} \limits \left(\prod_{i = 1}^M \frac{\dd s_i}{s_i} s_i^x \right) \mathbf{1}\left[\sum_{i=1}^M\limits s_i \leq \rho \right] \left(\rho - \sum_{j=1}^m \limits s_j \right)^y,
\end{align}
than we have
\begin{align}
     A(M, m, \rho, x, y) = \frac{\rho^{Mx + y} \cdot \Gamma\big[x\big]^M \Gamma\big[(M-m)x + y + 1\big]}{\Gamma\big[Mx + y + 1\big] \Gamma\big[(M-m)x + 1\big]}, \label{eq: Gammaintegral}
\end{align}
where
\begin{equation}
    \Gamma[x] = \int_0^\infty\limits t^{x-1} e^{-t} \dd t
\end{equation}
denotes the Gamma function.
\end{lemma}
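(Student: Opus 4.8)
The plan is to extract the $\rho$-dependence by scaling and then reduce $A$ to nested Dirichlet integrals over the unit simplex. Since $\tfrac{\dd s_i}{s_i}\,s_i^x = s_i^{x-1}\,\dd s_i$, the integrand is $\prod_{i=1}^M s_i^{x-1}$ on the simplex $\sum_i s_i \le \rho$, weighted by $(\rho-\sum_{j=1}^m s_j)^y$. The substitution $s_i = \rho u_i$ contributes a Jacobian $\rho^M$, a factor $\rho^{M(x-1)}$ from the monomials, and $\rho^y$ from the slack, so that
\[
A(M,m,\rho,x,y) = \rho^{Mx+y}\,A(M,m,1,x,y).
\]
This isolates the prefactor $\rho^{Mx+y}$ appearing in \eqref{eq: Gammaintegral} and leaves the normalized integral over $\{u\ge 0 : \sum_{i=1}^M u_i \le 1\}$ to evaluate.

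The computational core is the generalized Dirichlet (Liouville) identity
\[
\int_{\{u\ge 0,\ \sum_{j=1}^k u_j \le 1\}} \prod_{j=1}^k u_j^{x-1}\Big(1-\sum_{j=1}^k u_j\Big)^{c}\,\dd u = \frac{\Gamma[x]^k\,\Gamma[c+1]}{\Gamma[kx+c+1]},
\]
valid for $x>0$ and $c\ge 0$, which I would prove by induction on $k$. The base case $k=1$ is the Beta integral $\int_0^1 u^{x-1}(1-u)^c\,\dd u = \Gamma[x]\Gamma[c+1]/\Gamma[x+c+1]$; the inductive step integrates out $u_k$ via the substitution $u_k = (1-\sum_{j<k}u_j)\,w$, which factors off $\Gamma[x]\Gamma[c+1]/\Gamma[x+c+1]$ and raises the slack exponent from $c$ to $x+c$ for the remaining $k-1$ variables, whereupon the induction hypothesis closes the telescoping product of Gamma functions.

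With this identity in hand I would evaluate the normalized integral in two stages. Splitting the coordinates into the $m$ variables $u_1,\dots,u_m$ appearing in the slack factor and the $M-m$ free variables, I fix $t = 1-\sum_{j=1}^m u_j$ and integrate the free variables over $\{v\ge 0 : \sum_i v_i \le t\}$; rescaling $v = t w$ and applying the identity with $c=0$ and $k=M-m$ gives $t^{(M-m)x}\,\Gamma[x]^{M-m}/\Gamma[(M-m)x+1]$. The factor $t^{(M-m)x}$ merges with the original $t^y$, so the surviving $m$-variable integral carries slack exponent $(M-m)x+y$; a second application of the identity with $c=(M-m)x+y$ and $k=m$ returns $\Gamma[x]^m\,\Gamma[(M-m)x+y+1]/\Gamma[Mx+y+1]$. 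Multiplying the two stages collapses $\Gamma[x]^{M-m}\Gamma[x]^m = \Gamma[x]^M$ and, after restoring $\rho^{Mx+y}$, reproduces exactly the right-hand side of \eqref{eq: Gammaintegral}.

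I expect the main obstacle to be the induction establishing the Liouville identity rather than the subsequent bookkeeping, together with verifying convergence and the degenerate cases. The hypotheses $x>0$ and $y\ge 0$ keep every Gamma argument strictly positive, so $\Gamma[x]$, $\Gamma[(M-m)x+1]$ and $\Gamma[(M-m)x+y+1]$ are finite; the boundary case $m=M$ is consistent, since then the free-variable stage is empty and $\Gamma[(M-m)x+1]=\Gamma[1]=1$ with slack exponent simply $y$. Because the integrand is nonnegative throughout, Fubini's theorem justifies the iterated integration at each stage.
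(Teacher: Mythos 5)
Your proposal is correct and follows essentially the same route as the paper: both first scale out $\rho$ via $s_i = \rho u_i$ to reduce to the unit simplex, and both then evaluate the normalized integral through the substitution $u = (1-\text{slack})\,w$, producing Beta factors $B[x,\cdot]$ whose Gamma-function expressions telescope to \eqref{eq: Gammaintegral}. The only difference is organizational rather than mathematical: you package the recursion as a standalone Dirichlet--Liouville identity proved by induction and apply it in two blocks (the $M-m$ free variables with $c=0$, then the $m$ slack variables with $c=(M-m)x+y$), while the paper peels off one variable at a time --- slack variables first, reducing $A(M,m,1,x,y)$ to $A(M-1,m-1,1,x,y)$, then the free variables --- so the underlying computations coincide step for step.
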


\noindent The proof can be found in the Appendix \ref{sec: proofgammaint}. Using Lemma \ref{lem: GammaIntegrale} we can prove the following Theorem.

\begin{theorem}\label{thm: keineSubmulti}
Let $2\leq p < \infty, \lambda \geq d(1-\tfrac{p}{2}) + \tfrac{p}{2}$ and $\liminf_{M \to \infty} \tfrac{D(M)}{D(M+1)} \geq C_1 > 0$. There is no constant $C_{D, \lambda, p} < \infty$ such that for all $\underline{w}, \underline{v} \in \mathcal{W}^D_{\lambda, p}$
\begin{equation}
    \Vert (\underline{w}\ast_\rho \underline{v})^{sym}\Vert^D_{\lambda, p} \leq C_{D, \lambda, p} \Vert \underline{w}\Vert^D_{\lambda, p} \, \Vert \underline{v}\Vert^D_{\lambda, p} \label{eq: Submulti2}.
\end{equation}
\end{theorem}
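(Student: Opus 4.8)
The plan is to refute \eqref{eq: Submulti2} by producing, for every candidate constant, a pair of interactions that violates it; concretely I construct a one-parameter family $\underline{w}(\epsilon),\underline{v}(\epsilon)$ whose $\Vert\cdot\Vert^D_{\lambda,p}$-norms are pinned to a fixed value while $\Vert(\underline{w}(\epsilon)\ast_\rho\underline{v}(\epsilon))^{sym}\Vert^D_{\lambda,p}\to\infty$. By the remark following Proposition~\ref{prop: Wickordering2} it is enough to work with the non-symmetrized product \eqref{eq: w3MN}, and following Remark~\ref{bem: twodifficulties} I take $\underline{w}$ to be of pure annihilation type (components $w_{0,\cdot}$) and $\underline{v}$ of pure creation type (components $v_{\cdot,0}$). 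Then in \eqref{eq: w3MN} only $m=M$, $n=N$ survive, so each output component collapses to the contraction sum $\sum_{q\ge 0}\binom{M+q}{q}\binom{N+q}{q}\,q!\int_{B_1^q}|x_1^q|^{-1}w_{0,N+q}\,\mathbf{1}[\Sigma[x_1^q]<\rho]\,v_{M+q,0}$, and I bound $\Vert(\underline{w}\ast_\rho\underline{v})^{sym}\Vert^D_{\lambda,p}$ from below by restricting the $D$-weighted sum to a suitable set of output indices $(M,N)$. This reduces the whole statement to a comparison of explicit integrals.

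For the kernels I use pure power laws $w_{0,B}\propto|\tilde{k}_1^B|^{\beta}\,\mathbf{1}[\Sigma\le\tfrac{\rho}{2}]$ and $v_{C,0}\propto|k_1^C|^{\beta}\,\mathbf{1}[\Sigma\le\tfrac{\rho}{2}]$ with $\beta=(\lambda-d+\epsilon)/p$, i.e. chosen just inside the integrability threshold guaranteed by $\lambda\ge d(1-\tfrac{p}{2})+\tfrac{p}{2}$; the parameter $\epsilon>0$ measures the distance to that threshold. Every object is then closed-form via Lemma~\ref{lem: GammaIntegrale}: up to the sphere volume the factor norm $\Vert v_{C,0}\Vert^p_{\lambda,p}$ equals $A(C,0,\tfrac{\rho}{2},\epsilon,0)$, while each $q$-fold contraction is of the type $A(q,0,\tfrac{\rho}{2},x_J,0)$ with $x_J=d-1+\tfrac{2(\lambda-d)}{p}+\tfrac{2\epsilon}{p}$. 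After normalizing so that $\Vert\underline{w}(\epsilon)\Vert^D_{\lambda,p}=\Vert\underline{v}(\epsilon)\Vert^D_{\lambda,p}=1$, each retained output component becomes a combinatorial factor times a ratio of Gamma functions, and the failure of \eqref{eq: Submulti2} is reduced to showing that the resulting expression is unbounded as $\epsilon\downarrow 0$.

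The engine of the divergence is the factor $q!$ produced by the contractions in \eqref{eq: w3MN}: the hypothesis $\liminf_{M}D(M)/D(M+1)\ge C_1>0$ forces $D$ to grow at most geometrically, so the weight ratios $D(M)D(N)/\big(D(M+q)D(N+q)\big)$ decay only geometrically and cannot absorb $q!$ — this is exactly the situation complementary to Lemma~\ref{lem: BeispielFakultaet}, where a factorial weight neutralizes the same combinatorics. The main obstacle, and the reason the lower bound is delicate, is the phase-space suppression carried by the reduced cutoff $\mathbf{1}[\Sigma<\rho]$: through Lemma~\ref{lem: GammaIntegrale} it enters as a factor $1/\Gamma[q\,x_J+1]$ that competes head-on with $q!$, and for large $\lambda$ (in particular the physical norm $\lambda=3+2\mu$) this suppression is strong enough to defeat any single output component. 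Overcoming it is where the second difficulty of Remark~\ref{bem: twodifficulties} must be exploited: since $(\underline{w}\ast_\rho\underline{v})_{M,N}$ feeds on the higher factor components $w_{0,N+q}$ and $v_{M+q,0}$, I expect to keep the full $D$-weighted sum over the output indices $(M,N)$ and to tune $\epsilon\downarrow 0$ so that the number of effectively contributing components diverges, letting the accumulated factorial growth outrun both the Gamma-function phase-space factor and the geometric weight across the entire range $2\le p<\infty$, $\lambda\ge d(1-\tfrac{p}{2})+\tfrac{p}{2}$. The crux is precisely this uniform balancing; the omitted endpoint $p=\infty$ is handled separately in Theorem~\ref{thm: punendlich}.
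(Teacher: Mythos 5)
Your construction coincides exactly with the paper's: pure annihilation/creation factors $w_{0,n}$, $v_{m,0}$ with kernels $\vert\cdot\vert^{(\lambda-d+\epsilon)/p}\mathbf{1}[\Sigma<\tfrac{\rho}{2}]$, norms pinned independently of $\epsilon$ via Lemma \ref{lem: GammaIntegrale}, and the collapse of \eqref{eq: w3MN} to the single pair $m=M$, $n=N$. But your argument stops precisely at the step that constitutes the proof. You first name the $q!$ from the contractions as the engine of divergence, then correctly observe that the cutoff-induced factor $1/\Gamma[q\,x_J+1]$ defeats $q!$ in every single component, and you are left with the unsubstantiated hope that ``accumulated factorial growth'' across many output components will outrun both the Gamma suppression and the geometric weights. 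That is not the mechanism that works, and no estimate in your proposal delivers it. In the paper's proof all terms with $q\geq 2$ are simply \emph{discarded}: the lower bound keeps only the $q=1$ term, whose combinatorial factor is $\binom{M+1}{1}\binom{N+1}{1}=(M+1)(N+1)$ --- linear growth in the output indices, not factorial growth in $q$. The divergence is then a statement about the $D$-weighted $(M,N)$-sum alone, and the ``uniform balancing'' you flag as the crux is resolved by explicit Gamma-function asymptotics, not by the $q$-sum.

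Concretely, the missing quantitative step runs as follows. Normalize with $c_m\propto\big(D(m)\,m^{\kappa}\big)^{-1}$ for some $1<\kappa<3/2$ (your flat normalization to $1$ leaves this polynomial tempering unspecified, yet it is essential: it keeps the factor norms summable while leaving enough mass in high components). With $\epsilon=1/a$, $a\in\mathbb{N}$, and $b$ the integer with $b-1\leq 2\lambda-2d+(d-1)p<b$, the $q=1$ lower bound together with Lemma \ref{lem: GammaIntegrale} gives a per-component penalty $\big(\Gamma[\tfrac{M}{a}+1]/\Gamma[\tfrac{M}{a}+b+1]\big)^{1/p}\geq(\tfrac{M}{a}+b)^{-b/p}$, which softens as $a\to\infty$; the hypothesis $\liminf_{M}D(M)/D(M+1)\geq C_1>0$ plus positivity yields a uniform bound $D(M)/D(M+1)\geq\tilde{C}_1$, so the weights exactly cancel against $c_{M+1}$ up to $(M+1)^{1-\kappa}$. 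Summing, the weighted norm is bounded below by a constant times $\Gamma[1/a]^{-2/p}\big\{a^{b/p}\sum_{M}(M+ba)^{-(\kappa-1)-b/p}\big\}^2\gtrsim a^{2(2-\kappa-1/p)}$, where the normalization cost $\Gamma[\epsilon]^{2/p}$ is tamed by $\tfrac{1}{a}\Gamma[1/a]=\Gamma[1+\tfrac{1}{a}]\leq 1$; since $\kappa<3/2\leq 2-\tfrac{1}{p}$ for $p\geq 2$, this diverges. So roughly $a$ components, each of size $\sim M^{1-\kappa}$, contribute --- that is the correct realization of your intuition that ``the number of effectively contributing components diverges,'' but it is carried by the linear $q=1$ combinatorics against the liminf hypothesis, and without this estimate your proposal does not establish \eqref{eq: Submulti2} fails.
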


\begin{proof}
The strategy of the proof is to construct two elements $\underline{w}(\epsilon)$ and $v(\epsilon)$ depending on a parameter $\epsilon > 0$ such that the norms $\Vert \underline{w}(\epsilon)\Vert^D_{\lambda, p} = K_1$ and $\Vert \underline{v}(\epsilon)\Vert^D_{\lambda, p} = K_2$ are independent of $\epsilon$ while $\lim_{\epsilon \downarrow 0}\Vert (\underline{w}(\epsilon)\ast_{\rho} \underline{v}(\epsilon))\Vert_{\lambda, p}^D$ diverges.\\
For all $m,n \in \mathbb{N}$ we define
\begin{align}
    w_{m,n}[r; k_1^m; \Tilde{k}_1^n] =& \begin{cases}
    c_n \cdot \vert \Tilde{k}_1^{n}\vert^{\frac{\lambda - d + \epsilon}{p}} \mathbf{1}[\Sigma [\tilde{k}_1^{n}] < \frac{\rho}{2}]&\text{, if }m = 0 \land n\geq 1\\
    0 &\text{, else}
    \end{cases}\\
    v_{m,n}[r; k_1^m; \Tilde{k}_1^n] =& \, w_{n,m}[r; k_1^n; \Tilde{k}_1^m].
\end{align}
The constants $c_n$ will be chosen later. By using Lemma \ref{lem: GammaIntegrale} we can compute the $\Vert \cdot\Vert_{\lambda, p}$-norm of these functions
\begin{align}
\begin{split}
    \Vert w_{0, m}(\epsilon)\Vert_{\lambda, p} =& \Vert v_{m, 0}(\epsilon)\Vert_{\lambda, p} \\
    =& c_m \left(\int_{B_1^m}\limits \frac{\dd k_1^{m}}{\vert k_1^{m}\vert^{\lambda}} \vert k_1^{m}\vert^{\lambda - d + \epsilon} \mathbf{1}\left[\Sigma [k_1^{m}] < \tfrac{\rho}{2}\right] \right)^{1/p} \\
    =& c_m (O_{d-1})^{m/p} \left(\int_{(\mathbb{R}_0^+)^m}\limits \left(\prod_{i = 1}^m\limits \frac{\dd s_i}{s_i} s_i^{\epsilon} \right) \mathbf{1}\left[\sum_{i=1}^m \limits s_i < \frac{\rho}{2}\right] \right)^{1/p} \\
    =& c_m (O_{d-1})^{m/p} A(m, 0, \tfrac{\rho}{2}, \epsilon, 0) \\
    =& c_m \left(\frac{\rho}{2}\right)^{m\epsilon/p}\left[\frac{\left(O_{d-1} \Gamma[\epsilon]\right)^m}{\Gamma[m\epsilon + 1]} \right]^{1/p}.
\end{split}
\end{align}
Now, we set
\begin{align}
    c_m \coloneqq \left(\frac{\rho}{2}\right)^{-m\epsilon/p} \left[\frac{\left(O_{d-1} \Gamma[\epsilon]\right)^m}{\Gamma[m\epsilon + 1]} \right]^{-1/p} \cdot \frac{1}{D(m)m^\kappa},
\end{align}
where $1 < \kappa < 3/2$ is an arbitrary constant such that
\begin{align}
    \Vert \underline{w}(\epsilon)\Vert_{\lambda, p}^D =& \Vert \underline{v}(\epsilon)\Vert_{\lambda, p}^D = \sum_{m = 1}^\infty \limits D(m)D(0) \Vert v_{m,0}(\epsilon)\Vert_{\lambda, p} = \sum_{m = 1}^\infty \limits \frac{D(0)}{m^\kappa} < \infty.
\end{align}
Next, the $(M,N)$-component of the (non-symmetrized) Wick-ordered product is given by
\begin{align}
\begin{split}
    (\underline{w}(\epsilon)&\ast_{\rho} \underline{v}(\epsilon))_{M,N}[k_1^M; \Tilde{k}_1^N] \\
    =& \sum_{m = 0}^M\limits \sum_{n = 0}^N\limits \sum_{q\geq 0}\limits \binom{m+q}{q}\binom{n+q}{q} q! \\
    &\int_{B_1^q}\limits \frac{\dd x_1^{q}}{\vert x_1^{q}\vert} \mathbf{1}\left[\Sigma[x_1^q] + \Sigma[k_1^m] + \Sigma[\tilde{k}_1^n] < \rho \right]\\
    & \quad w_{M-m, q+n}[k_{m+1}^{M}; x_1^{q}, \Tilde{k}_1^{n}](\epsilon) v_{q+m, N-n}[x_1^{q}, k_1^{m}; \Tilde{k}_{n+1}^{N}](\epsilon) \\
    =& \sum_{q \geq 0} \binom{M+q}{q}\binom{N+q}{q} q! \\
    & \int_{B_1^q}\limits \frac{\dd x_1^{q}}{\vert x_1^{q}\vert} \mathbf{1}\left[\Sigma[x_1^q] + \Sigma[k_1^M] + \Sigma[\tilde{k}_1^N] < \rho \right]\\
    & \mathbf{1}\left[\Sigma[x_1^q] + \Sigma[k_1^M] < \tfrac{\rho}{2} \right] \mathbf{1}\left[\Sigma[x_1^q] + \Sigma[\tilde{k_1^N}] < \tfrac{\rho}{2} \right] \\
    & c_{M+q} c_{N+q} \vert x_1^{q}\vert^{\frac{\lambda - d + \epsilon}{p}} \cdot \vert \Tilde{k}_1^{N}\vert^{\frac{\lambda - d + \epsilon}{p}} \cdot \vert x_1^{q}\vert^{\frac{\lambda - d + \epsilon}{p}} \cdot \vert k_1^{M}\vert^{\frac{\lambda - d + \epsilon}{p}} \\
    =& \sum_{q \geq 0} \binom{M+q}{q}\binom{N+q}{q} q! c_{M+q} c_{N+q} \\
    & \mathbf{1}\left[\Sigma[k_1^M] < \tfrac{\rho}{2} \right] \vert k_1^{M}\vert^{\frac{\lambda - d + \epsilon}{p}} \mathbf{1}\left[\Sigma[\tilde{k}_1^N] < \tfrac{\rho}{2} \right] \vert \Tilde{k}_1^{N}\vert^{\frac{\lambda - d + \epsilon}{p}} \\
    & \int_{B_1^q}\limits \frac{\dd x_1^{q}}{\vert x_1^{q}\vert} \mathbf{1}\left[\Sigma[x_1^q] < \tfrac{\rho}{2} - \max\left\{\Sigma[k_1^M], \Sigma[\tilde{k}_1^N]\right\} \right] \vert x_1^{q}\vert^{2\frac{\lambda - d + \epsilon}{p}} \\
    =& \left\vert (\underline{w}(\epsilon)\ast_{\rho} \underline{v}(\epsilon))_{M,N}[k_1^M; \Tilde{k}_1^N] \right\vert.
\end{split}
\end{align}
This function is completely symmetric in $k_1^M$ and $\tilde{k}_1^N$, so the symmetrized Wick-ordered product will be exact the same. \newpage

\noindent Again by Lemma \ref{lem: GammaIntegrale} we compute the last integral and estimate the absolute value of the product by the $q=1$ term
\begin{align}
\begin{split}
    \bigg\vert (\underline{w}(\epsilon)&\ast_{\rho} \underline{v}(\epsilon))^{sym}_{M,N}[k_1^M; \Tilde{k}_1^N]\bigg\vert \\
    =& \sum_{q \geq 0} \binom{M+q}{q}\binom{N+q}{q} q! c_{M+q} c_{N+q} \\
    & \mathbf{1}\left[\Sigma[k_1^M] < \tfrac{\rho}{2} \right] \vert k_1^{M}\vert^{\frac{\lambda - d + \epsilon}{p}} \mathbf{1}\left[\Sigma[\tilde{k}_1^N] < \tfrac{\rho}{2} \right] \vert \Tilde{k}_1^{N}\vert^{\frac{\lambda - d + \epsilon}{p}} \cdot (O_{d-1})^q \\
    & \int_{(\mathbb{R}_0^+)^q}\limits \left(\prod_{i=1}^q\limits \frac{\dd r_i}{r_i}r_i^{\frac{2\lambda - 2d + 2\epsilon + (d-1)p}{p}} \right)\mathbf{1}\left[\sum_{i=1}^q \limits r_i < \frac{\rho}{2} - \max\left\{\Sigma[k_1^M], \Sigma[\tilde{k}_1^N]\right\} \right] \\
    \geq& \binom{M+1}{1}\binom{N+1}{1} 1! c_{M+1} c_{N+1} \\
    &\mathbf{1}\left[\Sigma[k_1^M] < \tfrac{\rho}{2} \right] \vert k_1^{M}\vert^{\frac{\lambda - d + \epsilon}{p}} \mathbf{1}\left[\Sigma[\tilde{k}_1^N] < \tfrac{\rho}{2} \right] \vert \Tilde{k}_1^{N}\vert^{\frac{\lambda - d + \epsilon}{p}} \cdot (O_{d-1}) \\
    & \frac{\Gamma[\frac{2\lambda - 2d + 2\epsilon + (d-1)p}{p}]}{\Gamma[\frac{2\lambda - 2d + 2\epsilon + (d-1)p}{p} + 1]}\left(\tfrac{\rho}{2} - \max\left\{\Sigma[k_1^M], \Sigma[\tilde{k}_1^N]\right\}\right)^{\frac{2\lambda - 2d + 2\epsilon + (d-1)p}{p}} \\
    \geq& \frac{O_{d-1} p}{2\lambda - 2d + 2\epsilon + (d-1)p} \Bigg\{(M+1) c_{M+1} \mathbf{1}\left[\Sigma[k_1^M] < \tfrac{\rho}{2} \right] \\
    &\vert k_1^{M}\vert^{\frac{\lambda - d + \epsilon}{p}} \left(\tfrac{\rho}{2} - \Sigma[k_1^M] \right)^{\frac{2\lambda - 2d + 2\epsilon + (d-1)p}{p}}\Bigg\} \left\{(M \leftrightarrow N) \right\}.
\end{split}
\end{align}
This implies an estimate for the $\Vert \cdot\Vert_{\lambda, p}$-norm
\begin{align}
\begin{split}
    \Vert (\underline{w}(\epsilon)&\ast_{\rho} \underline{v}(\epsilon))^{sym}_{M,N}\Vert_{\lambda, p} = \left(\int_{B_1^{M+N}}\limits \frac{\dd k_1^M \dd \tilde{k}_1^N}{\vert k_1^M\vert^{\lambda} \vert \tilde{k}_1^N\vert^\lambda} \bigg\vert (\underline{w}(\epsilon)\ast_{\rho} \underline{v}(\epsilon))_{M,N}[k_1^M; \Tilde{k}_1^N]\bigg\vert^p \right)^{1/p} \\
    \geq& \frac{O_{d-1} p}{2\lambda - 2d + 2\epsilon + (d-1)p} (M+1)c_{M+1} (N+1) c_{N+1} \\
    &\left(\int_{B_1^M}\limits \frac{\dd k_1^M}{\vert k_1^M\vert^{\lambda}}\mathbf{1}\left[\Sigma[k_1^M] < \tfrac{\rho}{2} \right] \vert k_1^M\vert^{\lambda - d + \epsilon} \left(\tfrac{\rho}{2} - \Sigma[k_1^M] \right)^{2\lambda - 2d + 2\epsilon + (d-1)p} \right)^{1/p} \\
    & \left((M\leftrightarrow N) \right)^{1/p}.
\end{split}
\end{align}
Once more, Lemma \ref{lem: GammaIntegrale} gives us the values of the integrals so we get
\begin{align}
\begin{split}
    &\Vert (\underline{w}(\epsilon)\ast_{\rho} \underline{v}(\epsilon))^{sym}_{M,N}\Vert_{\lambda, p} \geq \frac{O_{d-1} p}{2\lambda - 2d + 2\epsilon + (d-1)p} (M+1)c_{M+1} (N+1) c_{N+1} \\
    & \left((O_{d-1})^{M}\int_{(\mathbb{R}_0^+)^M}\limits \left(\prod_{i=1}^M\limits \frac{\dd s_i}{s_i} s_i^{\epsilon} \right)\mathbf{1}\left[\Sigma_{i=1}^M s_i < \tfrac{\rho}{2} \right] \left(\tfrac{\rho}{2} - \Sigma_{i=1}^M s_i\right)^{2\lambda - 2d + 2\epsilon + (d-1)p} \right)^{1/p} \\
    & \quad\left((M\leftrightarrow N) \right)^{1/p} \\
    &= \frac{O_{d-1} p}{2\lambda - 2d + 2\epsilon + (d-1)p} \Bigg\{(M+1)c_{M+1} \left( \frac{\rho}{2}\right)^{(M\epsilon + 2\lambda - 2d + 2\epsilon + (d-1)p)/p}\\
    &\qquad \left(\frac{(O_{d-1}\Gamma[\epsilon])^M \Gamma[2\lambda - 2d + 2\epsilon + (d-1)p +1]}{\Gamma[M\epsilon + 2\lambda - 2d + 2\epsilon + (d-1)p + 1]}\right)^{1/p} \Bigg\}\{(M \leftrightarrow N) \}.
\end{split}
\end{align}
Inserting the definition of the constant $c_{M+1}$ yields
\begin{align}
\begin{split}
    \Vert (\underline{w}(\epsilon)&\ast_{\rho} \underline{v}(\epsilon))^{sym}_{M,N}\Vert_{\lambda, p}\\
    \geq& \frac{O_{d-1} p}{2\lambda - 2d + 2\epsilon + (d-1)p} \Bigg\{\frac{(M+1)}{D(M+1)(M+1)^{\kappa}} \left(\frac{\rho}{2}\right)^{-(M+1)\epsilon/p} \\
    &\qquad \left[\frac{\left(O_{d-1} \Gamma[\epsilon]\right)^{M+1}}{\Gamma[(M+1)\epsilon + 1]} \right]^{-1/p} \left( \frac{\rho}{2}\right)^{(M\epsilon + 2\lambda - 2d + 2\epsilon + (d-1)p)/p}\\
    &\qquad \left(\frac{(O_{d-1}\Gamma[\epsilon])^M \Gamma[2\lambda - 2d + 2\epsilon + (d-1)p +1]}{\Gamma[M\epsilon + 2\lambda - 2d + 2\epsilon + (d-1)p + 1]}\right)^{1/p} \Bigg\} \\
    & \{(M\leftrightarrow N) \} \\
    =& \frac{(O_{d-1})^{1 - \frac{2}{p}}p\Gamma[2\lambda - 2d + 2\epsilon + (d-1)p + 1]^{2/p}\left(\tfrac{\rho}{2} \right)^{\frac{2\lambda - 2d + \epsilon + (d-1)p}{p}}}{D(M+1)D(N+1)(2\lambda - 2d + 2\epsilon + (d-1)p)\Gamma[\epsilon]^{2/p}} \\
    & \Bigg\{\frac{1}{(M+1)^{\kappa - 1}} \left(\frac{\Gamma[(M+1)\epsilon + 1]}{\Gamma[M \epsilon + 2\lambda - 2d + 2\epsilon + (d-1)p + 1]}\right)^{1/p} \Bigg\} \\
    & \{(M\leftrightarrow N) \}.
\end{split}
\end{align}
The assumption $\lambda \geq d(1-\tfrac{p}{2}) + \tfrac{p}{2}$ implies that $2\lambda - 2d + 2\epsilon + (d-1)p \geq 2\epsilon$. Therefore it exists a constant $C_2 >0$ such that for sufficiently small $\epsilon >0$ we have
\begin{equation}
    \frac{(O_{d-1})^{1 - \frac{2}{p}}p\Gamma[2\lambda - 2d +2\epsilon + (d-1)p + 1]^{2/p}\left(\tfrac{\rho}{2} \right)^{\frac{2\lambda - 2d + \epsilon + (d-1)p}{p}}}{(2\lambda - 2d + 2\epsilon + (d-1)p)} \geq C_2.
\end{equation}
On the one hand, the gamma function $\Gamma[x]$ is monotonically increasing for $x\geq 2$. On the other hand $1/2 \leq \Gamma[x] \leq 1$ for $x \in [1,2]$, so all in all we get
\begin{align}
    \forall\, x, y \in [1, \infty): x \leq y \Rightarrow \frac{1}{2} \Gamma[x] \leq \Gamma[y].
\end{align}
Applying this inequality to $\Gamma[(M+1)\epsilon + 1]$ and $\Gamma[(N+1)\epsilon + 1]$ respectively, we estimate
\begin{align}
\begin{split}
    \Vert (\underline{w}(\epsilon)&\ast_{\rho} \underline{v}(\epsilon))^{sym}_{M,N}\Vert_{\lambda, p} \\
    \geq& \frac{C_2 \left(\tfrac{1}{2} \right)^{2/p}}{D(M+1)D(N+1)\Gamma[\epsilon]^{2/p}}\\
    &\Bigg\{\frac{1}{(M+1)^{\kappa - 1}} \left(\frac{\Gamma[M\epsilon + 1]}{\Gamma[M \epsilon + 2\lambda - 2d + 2\epsilon + (d-1)p + 1]}\right)^{1/p} \Bigg\} \\
    &\{(M\leftrightarrow N) \}.
\end{split}
\end{align}
Up to this point nothing is said about the concrete realisation of the limit $\epsilon \downarrow 0$. Now, we choose $\epsilon \equiv \epsilon(a) \coloneqq 1/a$ for $a \in \mathbb{N}$ such that $\epsilon \downarrow 0$ for $a \to \infty$. By assumption there exists some $b \in \mathbb{N}$ for which $0\leq b-1\leq 2\lambda - 2d + (d-1)p < b$ holds. Therefore $2\lambda - 2d + 2\epsilon(a) + (d-1)p < b$ is also true for sufficiently large $a \in \mathbb{N}$ implying
\begin{align}
\begin{split}
    \Vert (\underline{w}(\epsilon(a))&\ast_{\rho} \underline{v}(\epsilon(a)))^{sym}_{M,N}\Vert_{\lambda, p} \\
    \geq& \frac{C_2 \left(\tfrac{1}{2} \right)^{2/p}}{D(M+1)D(N+1)\Gamma[1/a]^{2/p}} \Bigg\{\frac{1}{(M+1)^{\kappa - 1}} \left(\frac{\Gamma[\frac{M}{a} + 1]}{\Gamma[\frac{M}{a} + b + 1]}\right)^{1/p} \Bigg\} \\
    & \{(M\leftrightarrow N) \} \\
    \geq& \frac{C_2 \left(\tfrac{1}{2} \right)^{2/p}}{D(M+1)D(N+1)\Gamma[1/a]^{2/p}} \Bigg\{\frac{1}{(M+1)^{\kappa - 1}} \left(\frac{1}{\left(\frac{M}{a} + b\right)^b}\right)^{1/p} \Bigg\} \\
    & \{(M\leftrightarrow N) \}.
\end{split}
\end{align}
Since $\liminf_{M \to \infty}\tfrac{D(M)}{D(M+1)} = C_1 > 0$ and  $D(M)>0$ for all $M \in \mathbb{N}_0$, there exists a constant $\tilde{C}_1>0$ such that $\tfrac{D(M)}{D(M+1)}\geq \tilde{C}_1$ for all $M \in \mathbb{N}_0$. Finally we get an estimate for the $\Vert \cdot \Vert^D_{\lambda, p}$-norm
\begin{align}
\begin{split}
    \Vert \underline{w}(\epsilon(a))&\ast_{\rho} \underline{v}(\epsilon(a))^{sym}\Vert_{\lambda, p}^D = \sum_{M,N \geq 0} D(M) D(N) \Vert(\underline{w}(\epsilon(a))\ast_{\rho} \underline{v}(\epsilon(a)))^{sym}_{M,N}\Vert_{\lambda, p} \\
    \geq& \frac{C_2 \left(\tfrac{1}{2} \right)^{2/p}}{\Gamma[1/a]^{2/p}} \left\{\sum_{M=1}^\infty \limits \frac{D(M)}{D(M+1)(M+1)^{\kappa -1}\left(\frac{M}{a} + b \right)^{b/p}} \right\}^2 \\
    \geq& \frac{C_2 \tilde{C}_1^2\left(\tfrac{1}{2} \right)^{2/p}}{\Gamma[1/a]^{2/p}} \left\{a^{b/p} \sum_{M=1}^\infty \limits \frac{1}{(M+ ba)^{\kappa -1}\left(M + ba \right)^{b/p}} \right\}^2 \\
    \geq& \frac{C_2 \tilde{C}_1^2\left(\tfrac{1}{2} \right)^{2/p}}{\left(\frac{1}{a}\Gamma[1/a]\right)^{2/p}} \left\{a^{(b-1)/p} \sum_{M=ba}^\infty \limits \frac{1}{\left(M + ba \right)^{b/p + \kappa - 1}} \right\}^2.
\end{split}
\end{align}
If $b<p$, i.e. $\lambda < d + \tfrac{p}{2}(2-d)$, we already get the desired divergence for sufficiently small $\kappa > 1$. Now let $b\geq p$. Using $1/a \Gamma[1/a] = \Gamma[1+ 1/a] \leq 1$ and estimating the sum by its integral we get
\begin{align}
\begin{split}
    \Vert \underline{w}(\epsilon(a))&\ast_{\rho} \underline{v}(\epsilon(a))^{sym}\Vert_{\lambda, p}^D\\
    &\geq C_2 \tilde{C}_1^2 \left(\frac{1}{2}\right)^{2/p} \left\{a^{(b-1)/p} \int_{M=ba}^\infty \limits \frac{\dd M}{\left(M + ba \right)^{b/p + \kappa - 1}} \right\}^2 \\
    &= C_2 \tilde{C}_1^2 \left(\frac{1}{2}\right)^{2/p} \left\{a^{(b-1)/p} \frac{1}{\left(b/p + \kappa - 2 \right)(2ba)^{b/p + \kappa - 2}} \right\}^2 \\
    &= \frac{C_2 \tilde{C}_1^2 \left(\frac{1}{2}\right)^{2/p}}{\left(b/p + \kappa - 2 \right)^2 (2b)^{2b/p + 2\kappa - 4}}a^{2(2-\kappa - 1/p)} \to\infty,
\end{split}
\end{align}
as $a\to \infty$, since $\kappa < 3/2$ and $p\geq 2$.
\end{proof}

\noindent For $p = \infty$ the situation is even worse, as the following Theorem shows.

\begin{theorem}\label{thm: punendlich}
Let $p=\infty, \lambda \in \mathbb{R}$ and $\liminf_{M \to \infty} \tfrac{D(M)}{D(M+1)} \geq C_1 > 0$. There are elements $\underline{w}, \underline{v} \in \mathcal{W}^D_{\lambda, \infty}$, such that $(\underline{w}\ast_{\rho}\underline{v})^{sym}\notin \mathcal{W}^D_{\lambda, \infty}$, i.e.,
\begin{equation}
    \Vert (\underline{w}\ast_{\rho}\underline{v})^{sym}\Vert^D_{\lambda, \infty} = \infty .
\end{equation}
\end{theorem}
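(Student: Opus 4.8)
The plan is to adapt the construction from the proof of Theorem \ref{thm: keineSubmulti} to the endpoint $p=\infty$, the decisive simplification being that no limiting $\epsilon$-family is needed: a single fixed pair $\underline w,\underline v$ will already produce a product of infinite norm. I would take $\underline w$ to consist of pure annihilation operators and $\underline v$ of pure creation operators, setting
\begin{equation}
w_{0,n}[r;\tilde k_1^n] = c_n\,\vert\tilde k_1^n\vert^{\lambda}\,\mathbf 1[\Sigma[\tilde k_1^n] < \tfrac{\rho}{2}], \qquad w_{m,n}=0 \text{ for } m\ge 1,
\end{equation}
and $v_{m,n}=w_{n,m}$, with constants $c_m := (D(m)\,m^{\kappa})^{-1}$ for a fixed exponent $\kappa\in(1,2)$. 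Since $\vert k_1^0\vert=1$ one reads off $\Vert w_{0,n}\Vert_{\lambda,\infty}=c_n$, so that $\Vert\underline w\Vert^D_{\lambda,\infty}=\Vert\underline v\Vert^D_{\lambda,\infty}=D(0)\sum_{n\ge1}n^{-\kappa}<\infty$ precisely because $\kappa>1$; hence $\underline w,\underline v\in\mathcal W^D_{\lambda,\infty}$.

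Next I would evaluate the Wick-ordered product via \eqref{eq: w3MN}. Because $\underline w$ survives only for vanishing first index and $\underline v$ only for vanishing second index, in the double sum over $(m,n)$ only the term $m=M,n=N$ remains, leaving
\begin{equation}
(\underline w \ast_\rho \underline v)_{M,N} = \vert k_1^M\vert^{\lambda}\vert\tilde k_1^N\vert^{\lambda}\,\mathbf 1[\Sigma[k_1^M]<\tfrac{\rho}{2}]\,\mathbf 1[\Sigma[\tilde k_1^N]<\tfrac{\rho}{2}]\sum_{q\ge0}\binom{M+q}{q}\binom{N+q}{q}q!\,c_{M+q}c_{N+q}\,J_q,
\end{equation}
where $J_q=\int_{B_1^q}\dd x_1^q\,\vert x_1^q\vert^{2\lambda-1}\,\mathbf 1[\Sigma[x_1^q]<\tfrac{\rho}{2}-\max\{\Sigma[k_1^M],\Sigma[\tilde k_1^N]\}]$; this is manifestly symmetric in the $k$'s and the $\tilde k$'s, so symmetrization changes nothing. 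The ess-sup defining $\Vert\cdot\Vert_{\lambda,\infty}$ is attained in the limit $\max\{\Sigma[k_1^M],\Sigma[\tilde k_1^N]\}\downarrow0$; evaluating $J_q$ by Lemma \ref{lem: GammaIntegrale} (with $x=2\lambda+d-1$, $y=0$) and keeping only the $q=1$ term gives
\begin{equation}
\Vert(\underline w\ast_\rho\underline v)_{M,N}\Vert_{\lambda,\infty} \ge (M+1)c_{M+1}\,(N+1)c_{N+1}\cdot O_{d-1}\frac{(\rho/2)^{2\lambda+d-1}}{2\lambda+d-1} =: I\,(M+1)c_{M+1}(N+1)c_{N+1}.
\end{equation}

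Finally I would assemble the full norm. Using $(M+1)c_{M+1}=(D(M+1)(M+1)^{\kappa-1})^{-1}$, and noting that $\liminf_M D(M)/D(M+1)\ge C_1>0$ together with positivity of $D$ yields a uniform bound $D(M)/D(M+1)\ge\tilde C_1>0$ (finitely many early ratios are positive, the rest eventually $\ge C_1/2$), the double sum factorizes:
\begin{equation}
\Vert(\underline w\ast_\rho\underline v)^{sym}\Vert^D_{\lambda,\infty} \ge I\Big(\sum_{M\ge1}\frac{D(M)}{D(M+1)}\frac{1}{(M+1)^{\kappa-1}}\Big)^2 \ge I\,\tilde C_1^{\,2}\Big(\sum_{M\ge1}\frac{1}{(M+1)^{\kappa-1}}\Big)^2 = \infty,
\end{equation}
the series diverging because $\kappa-1<1$. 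The crux is thus the window $\kappa\in(1,2)$: $\kappa>1$ is exactly what makes the two input norms finite, while $\kappa-1<1$ is exactly what makes the output series a divergent $p$-series, and there is no room to close this gap — which is why at $p=\infty$ a single pair suffices. The only point needing separate care is the degenerate regime $2\lambda+d-1\le0$: there $I$ itself diverges, so already each component $\Vert(\underline w\ast_\rho\underline v)_{M,N}\Vert_{\lambda,\infty}$ is infinite for $M,N\ge1$ and the claim is immediate. The main (modest) obstacle is therefore bookkeeping: confirming that the retained $q=1$ contribution is genuinely positive and that the ess-sup is controlled by the limit of small external momenta; everything else follows the template of Theorem \ref{thm: keineSubmulti}.
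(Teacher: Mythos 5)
Your proposal is correct and follows essentially the same route as the paper: the identical pure-annihilation/pure-creation pair with weights $c_n=(D(n)n^{\kappa})^{-1}$, the same reduction of the Wick sum to $m=M,\,n=N$, evaluation via Lemma \ref{lem: GammaIntegrale}, and the same uniform bound $D(M)/D(M+1)\geq \tilde C_1$. The only (immaterial) difference is bookkeeping: the paper fixes $\kappa=2$ and keeps the $q=2$ term, so each summand is bounded below by the constant $\tilde C_1^2/9$, whereas you take $\kappa\in(1,2)$ and the $q=1$ term, obtaining a divergent $p$-series with exponent $\kappa-1<1$ --- both choices exploit exactly the same mechanism, and your handling of the degenerate regime $2\lambda+d-1\leq 0$ also matches the paper's.
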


\begin{proof}
Similar to the previous theorem we define
\begin{align}
    w_{m,n}[r; k_1^m; \Tilde{k}_1^n] =& \begin{cases}
    \underbrace{\frac{1}{D(n)n^2}}_{\coloneqq c_n} \cdot \vert \Tilde{k}_1^{n}\vert^{\lambda} \mathbf{1}[\Sigma [\tilde{k}_1^{n}] < \frac{\rho}{2}]&\text{, if }m = 0 \land n\geq 1\\
    0 &\text{, else}
    \end{cases}\\
    v_{m,n}[r; k_1^m; \Tilde{k}_1^n] =& \, w_{n,m}[r; k_1^n; \Tilde{k}_1^m],
\end{align}
such that
\begin{equation}
    \begin{split}
        \Vert \underline{w}\Vert^D_{\lambda, \infty} =& \Vert \underline{v}\Vert^D_{\lambda, \infty} = \sum_{M\geq 1} D(0)D(M) \Vert w_{0,M}\Vert_{\lambda, \infty}\\
        =& D(0) \sum_{M\geq 1} \frac{1}{M^2} = D(0)\frac{\pi^2}{6} < \infty.
    \end{split}
\end{equation}
The $(M,N)$ component of the product is given by
\begin{equation}
    \begin{split}
    (\underline{w}&\ast_{\rho} \underline{v})_{M,N}[k_1^M; \Tilde{k}_1^N] \\
    =&\sum_{q \geq 0} \binom{M+q}{q}\binom{N+q}{q} q! c_{M+q} c_{N+q} \\
    & \mathbf{1}\left[\Sigma[k_1^M] < \tfrac{\rho}{2} \right] \vert k_1^{M}\vert^{\lambda} \vert \tilde{k}_1^N\vert^\lambda \mathbf{1}\left[\Sigma[\tilde{k}_1^N] < \tfrac{\rho}{2} \right] \\
    & \int_{B_1^q}\limits \frac{\dd x_1^{q}}{\vert x_1^{q}\vert} \mathbf{1}\left[\Sigma[x_1^q] < \tfrac{\rho}{2} - \max\left\{\Sigma[k_1^M], \Sigma[\tilde{k}_1^N]\right\} \right] \vert x_1^{q}\vert^{2\lambda } \\
    \geq& \binom{M+2}{2}\binom{N+2}{2} 2 c_{M+2} c_{N+2} (O_{d-1})^2 \\
    & \mathbf{1}\left[\Sigma[k_1^M] < \tfrac{\rho}{2} \right] \vert k_1^{M}\vert^{\lambda} \vert \tilde{k}_1^N\vert^\lambda \mathbf{1}\left[\Sigma[\tilde{k}_1^N] < \tfrac{\rho}{2} \right] \\
    & \int_{(\mathbb{R}_{0}^+)^2}\limits \left(\prod_{i=1}^2 \frac{\dd s_i}{s_i} s_i^{2\lambda + d-1} \right) \mathbf{1}\left[\sum_{i=1}^2 s_i < \frac{\rho}{2} - \max\left\{\Sigma[k_1^M], \Sigma[\tilde{k}_1^N]\right\} \right],
    \end{split}
\end{equation}
where the integral in the last line already diverges if $\lambda \leq \tfrac{1-d}{2}$ implying $(\underline{w}\ast_\rho \underline{v})\notin \mathcal{W}^D_{\lambda, \infty}$. Therefore let $\lambda > \tfrac{1-d}{2}$. Moreover, we see that this product is again completely symmetric in  $k_1^M$ and $\tilde{k}_1^N$, so it is already symmetrized. We estimate
\begin{equation}
    \begin{split}
        \Vert(\underline{w}&\ast_{\rho} \underline{v})^{sym}_{M,N}\Vert_{\lambda, \infty} \\
        \geq& \frac{M^2N^2 c_{M+2}c_{N+2}}{2} (O_{d-1})^2 \\
        & \sup_{(k_1^M, \tilde{k}_1^N) \in B_1^{M+N}} \Bigg\{\mathbf{1}\left[\Sigma[k_1^M] < \tfrac{\rho}{2} \right] \mathbf{1}\left[\Sigma[\tilde{k}_1^N] < \tfrac{\rho}{2} \right] \\
        &\quad \frac{\left(\frac{\rho}{2} -  \max\left\{\Sigma[k_1^M], \Sigma[\tilde{k}_1^N]\right\} \right)^{2(2\lambda + d- 1)}\Gamma[2\lambda + d- 1]^2}{\Gamma[2(2\lambda + d- 1) + 1]}\Bigg\}\\
        =& M^2N^2 c_{M+2}c_{N+2} \frac{(O_{d-1})^2\left(\frac{\rho}{2} \right)^{2(2\lambda + d- 1)} \Gamma[2(2\lambda + d- 1)]^2}{2\Gamma[2(2\lambda + d- 1) + 1]}\\
        =& \mathrm{const} \cdot M^2N^2 c_{M+2}c_{N+2}
    \end{split}
\end{equation}
As already mentioned in the proof of Theorem \ref{thm: keineSubmulti}, there is a constant $\tilde{C}_1 > 0$ such that $\tfrac{D(M)}{D(M+1)} \geq \tilde{C}_1$ for all $M \in \mathbb{N}_0$. This implies the divergence of the $\Vert \cdot \Vert^D_{\lambda, \infty}$-norm since
\begin{equation}
    \begin{split}
        \Vert(\underline{w}\ast_{\rho} \underline{v})^{sym}\Vert^D_{\lambda, \infty} \geq& C_2 \left(\sum_{M \geq 1} D(M)M^2 c_{M+2} \right)^2\\
        =& C_2 \left(\sum_{M \geq 1} \frac{D(M)M^2}{D(M+2)(M+2)^2} \right)^2\\
        \geq& C_2 \left(\sum_{M \geq 1} \frac{1}{9}\frac{D(M)}{D(M+1)}\frac{D(M+1)}{D(M+2)} \right)^2\\
        \geq& C_2 \left(\frac{1}{9} \sum_{M\geq 1} \Tilde{C}_1^2 \right)^2 = \infty.
    \end{split}
\end{equation}
\end{proof}

\section{No well-defined product for fast increasing weight functions} \label{sec: fastincreasing}
Regarding Lemma \ref{lem: BeispielFakultaet} and the counterexample in Theorem \ref{thm: keineSubmulti} one might think, that choosing fast increasing weight functions like $D(M) = M!$ would solve the problems. It is possible to show, however, that $\mathcal{W}^D_{\lambda, p}$ is not a closed algebra with respect to $\ast_\rho$, if we choose weight functions, such that $ \lim_{M \to \infty }\tfrac{D(M)}{D(M+1)} = 0$. The proof relies on the following observation.

\begin{lemma}\label{lem: Summe}
Let $(a_n)_{n\in\mathbb{N}} \in (\mathbb{R}_{>0})^{\mathbb{N}}$ be a sequence, such that
\begin{equation}
    \lim_{n \to \infty} \frac{a_n}{a_{n+1}} = 0.
\end{equation}
Then, the sequence
\begin{equation}
    b_n \coloneqq \frac{a_{2n}}{(a_n)^2 n^\kappa} \quad, \quad \forall n \in \mathbb{N}
\end{equation}
is unbounded for all $\kappa \geq 0$.
\end{lemma}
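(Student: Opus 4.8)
The plan is to argue by contradiction, but the decisive preliminary step is to translate the hypothesis on consecutive ratios into a statement about the growth of $a_n$ itself. Since $a_n/a_{n+1}\to 0$ is equivalent to $a_{n+1}/a_n\to\infty$ (the terms being positive), for every $R>0$ there is an $N$ with $a_{n+1}/a_n\ge R$ for all $n\ge N$; telescoping gives $a_n\ge a_N R^{\,n-N}$ for $n\ge N$, hence $\liminf_{n} a_n^{1/n}\ge R$. As $R$ is arbitrary this yields $a_n^{1/n}\to\infty$, equivalently $\ln a_n/n\to\infty$. In words, the ratio condition forces $a_n$ to grow \emph{super-exponentially}, and this is the property I would pit against any putative upper bound on $b_n$.

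Now suppose, for contradiction, that $(b_n)$ is bounded, say $b_n\le C$ for all $n$; by definition this reads $a_{2n}\le C\,n^\kappa a_n^2$. I would test this inequality only along the dyadic subsequence $n=2^k$, because there the index-doubling $2n\mapsto 2^{k+1}$ turns the bound into a self-referential recursion. Writing $A_k:=\ln a_{2^k}$, the inequality becomes
\[
A_{k+1}\le 2A_k+\kappa k\ln 2+\ln C .
\]
Dividing by $2^{k+1}$ and setting $u_k:=A_k/2^k$ gives $u_{k+1}\le u_k+(\kappa k\ln 2+\ln C)/2^{k+1}$, so summing the telescoped inequality yields $u_k\le u_0+\sum_{j\ge 0}(\kappa j\ln 2+\ln C)/2^{j+1}=:U<\infty$, the series converging for every $\kappa\ge 0$. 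Hence $\ln a_{2^k}/2^k=u_k\le U$ uniformly in $k$.

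This is the contradiction: the doubling bound has forced merely \emph{exponential} growth $\ln a_{2^k}\le U\,2^k$ along the powers of two, whereas super-exponential growth demands $\ln a_{2^k}/2^k\to\infty$. I expect the main conceptual obstacle to be exactly this recognition — that controlling $a_{2n}$ by $a_n^2$ (up to the harmless polynomial factor $n^\kappa$) is incompatible with $a_n^{1/n}\to\infty$ — together with the technical choice of restricting to $n=2^k$ so that the estimate closes into a solvable recursion; the polynomial weight $n^\kappa$ is inert here because it contributes only the convergent series $\sum_j j/2^{j}$.
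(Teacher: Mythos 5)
Your proposal is correct and takes essentially the same route as the paper's proof: both assume $b_n\le C$, restrict to the dyadic indices $n=2^k$ to turn the doubling bound into the recursion $a_{2^{k+1}}\le C\,2^{\kappa k}\,a_{2^k}^2$, solve it to obtain an exponential bound $a_{2^k}\le(\mathrm{const})^{2^k}$ (the paper by explicit induction, you by telescoping the logarithms $u_k=2^{-k}\ln a_{2^k}$), and contradict this with the super-exponential lower bound $a_n\ge a_N R^{\,n-N}$ that the ratio hypothesis yields (the paper's $(1/\epsilon)$-telescoping). The only cosmetic point is to take $C\ge 1$ (harmless, since enlarging $C$ preserves the assumed bound) so that every term of your telescoped series is nonnegative and the partial sums are indeed dominated by the full sum.
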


\noindent A proof is given in the Appendix \ref{sec: proofsumme}. Now, we come to the final step in the proof of the absence of submultiplicative norms for Wick-ordered Operator Products.

\begin{theorem} \label{thm: fastgrowing}
Let $p\geq 2, \lambda \in \mathbb{R}$ and $\lim_{M \to \infty}\tfrac{D(M)}{D(M+1)} = 0$. There are elements $\underline{w}, \underline{v} \in \mathcal{W}^D_{\lambda, p}$, such that $(\underline{w}\ast_\rho \underline{v})^{sym}\notin \mathcal{W}^D_{\lambda, p}$, i.e., $\Vert (\underline{w}\ast_\rho \underline{v})^{sym}\Vert^D_{\lambda, p} = \infty$.
\end{theorem}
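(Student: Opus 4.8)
The plan is to exhibit a single pure-creation element $\underline{w}=\underline{v}$ whose self-product already leaves $\mathcal{W}^D_{\lambda,p}$, so that the divergence is driven purely by the dependence of $(\underline{w}\ast_\rho\underline{v})_{M,N}$ on the \emph{lower} indices in \eqref{eq: w3MN}, and not by the combinatorial factors. Fix $\kappa>1$ and the pairwise disjoint shells $A_n\coloneqq\{k\in\mathbb{R}^d:4^{-n-1}\rho<\vert k\vert\le 4^{-n}\rho\}\subset B_1$, which satisfy $n\sup_{k\in A_n}\vert k\vert<\tfrac{\rho}{2}$ for all $n\ge1$. I set $w_{m,n}\equiv0$ unless $m\ge1,\ n=0$, and
\begin{equation}
 w_{n,0}[r;k_1^n]\coloneqq\phi_n\prod_{i=1}^n\mathbf{1}[k_i\in A_n],\qquad \underline{v}\coloneqq\underline{w},
\end{equation}
with $\phi_n>0$ chosen so that $\Vert w_{n,0}\Vert_{\lambda,p}=\tfrac{1}{D(n)n^{\kappa}}$; since $\Vert w_{n,0}\Vert_{\lambda,p}^p=\phi_n^p\mu_n^{\,n}$ with $\mu_n\coloneqq\int_{A_n}\vert k\vert^{-\lambda}\dd^dk\in(0,\infty)$, this is possible. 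Then $\Vert\underline{w}\Vert^D_{\lambda,p}=D(0)\sum_{n\ge1}n^{-\kappa}<\infty$, so $\underline{w},\underline{v}\in\mathcal{W}^D_{\lambda,p}$.

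Next I would compute the product. Because both factors are pure creation, the second index $n+q$ of the left factor in \eqref{eq: w3MN} forces $n=q=0$; hence every surviving term has $q=0$, the contraction integral trivializes and all combinatorial factors $\binom{m+q}{q}\binom{n+q}{q}q!$ equal $1$, and only components with $N=0$ survive. For $M=2n$ this leaves the convolution
\begin{equation}
 (\underline{w}\ast_\rho\underline{v})_{2n,0}[k_1^{2n}]=\sum_{m=0}^{2n}w_{2n-m,0}[k_{m+1}^{2n}]\,\mathbf{1}[\Sigma[k_1^m]<\rho]\,v_{m,0}[k_1^m],
\end{equation}
which exhibits precisely the dependence on the lower indices $m,2n-m\le2n$. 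Disjointness of the shells is the decisive device: on the permutation-invariant set $A_n^{2n}\coloneqq\{(k_1,\dots,k_{2n}):k_i\in A_n\}$ every summand with $m\ne n$ vanishes (its box $A_m$ or $A_{2n-m}$ is disjoint from $A_n$, and the $m\in\{0,2n\}$ terms vanish as $w_{0,0}=0$), while the $m=n$ summand equals the constant $\phi_n^2$ there.

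I would then bound the \emph{symmetrized} product from below. Since $A_n^{2n}$ is invariant under $S_{2n}$ and the unsymmetrized kernel equals the constant $\phi_n^2$ on it, the symmetrization of Proposition \ref{prop: Wickordering2} leaves it unchanged there, i.e.\ $(\underline{w}\ast_\rho\underline{v})^{sym}_{2n,0}=\phi_n^2$ on $A_n^{2n}$. Restricting the norm integral to this set, where the shell constraints factorize, gives
\begin{equation}
 \Vert(\underline{w}\ast_\rho\underline{v})^{sym}_{2n,0}\Vert_{\lambda,p}^p\ge\int_{A_n^{2n}}\frac{\dd k_1^{2n}}{\vert k_1^{2n}\vert^{\lambda}}\,\phi_n^{2p}=\phi_n^{2p}\mu_n^{\,2n}=\big(\Vert w_{n,0}\Vert_{\lambda,p}^p\big)^2,
\end{equation}
hence $\Vert(\underline{w}\ast_\rho\underline{v})^{sym}_{2n,0}\Vert_{\lambda,p}\ge\Vert w_{n,0}\Vert_{\lambda,p}^2=\tfrac{1}{D(n)^2 n^{2\kappa}}$. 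Keeping only the $(2n,0)$-components,
\begin{equation}
 \Vert(\underline{w}\ast_\rho\underline{v})^{sym}\Vert^D_{\lambda,p}\ge D(0)\sum_{n\ge1}D(2n)\,\Vert(\underline{w}\ast_\rho\underline{v})^{sym}_{2n,0}\Vert_{\lambda,p}\ge D(0)\sum_{n\ge1}\frac{D(2n)}{D(n)^2 n^{2\kappa}}.
\end{equation}
By Lemma \ref{lem: Summe} applied to $a_n=D(n)$ (with exponent $2\kappa$) the summand is an unbounded sequence of positive numbers; in particular it does not tend to $0$, so the series diverges and $(\underline{w}\ast_\rho\underline{v})^{sym}\notin\mathcal{W}^D_{\lambda,p}$.

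I expect the symmetrization to be the genuine obstacle. The naive route---using the simplex-supported power profiles of Theorem \ref{thm: keineSubmulti}---forces one either to pay a factor $\binom{2n}{n}^{1/p-1}$ when bounding $\Vert(\cdot)^{sym}\Vert$ by one subset, or the duplication factor $\Gamma[n\epsilon+1]^{2}/\Gamma[2n\epsilon+1]\sim4^{-n\epsilon}$ when restricting to the joint simplex; either loss is \emph{exponential} in $n$, whereas Lemma \ref{lem: Summe} only guarantees that $D(2n)/D(n)^2$ beats every \emph{polynomial}, not every exponential. The disjoint-shell construction is exactly what removes this loss: it confines a single surviving summand to a permutation-invariant set on which the kernel is constant, making symmetrization free and yielding the clean inequality $\Vert(\underline{w}\ast_\rho\underline{v})^{sym}_{2n,0}\Vert_{\lambda,p}\ge\Vert w_{n,0}\Vert_{\lambda,p}^2$. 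For $p=\infty$ the same construction applies verbatim, with the integral over $A_n^{2n}$ replaced by the essential supremum.
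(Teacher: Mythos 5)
Your proof is correct, and it reaches the conclusion by the same strategic skeleton as the paper --- double the index, isolate a single $q=0$ ``splitting'' summand at the doubled component, deduce $\Vert(\underline{w}\ast_\rho\underline{v})^{sym}_{2n,0}\Vert_{\lambda,p}\geq\Vert w_{n,0}\Vert_{\lambda,p}^2$, and finish with Lemma \ref{lem: Summe} --- but with a genuinely different counterexample. The paper takes two-sided kernels $w_{m,n}$, nonzero for all $m,n\geq 1$, built from factorized power profiles with the index-dependent cutoff $\vert k_i\vert\leq\tfrac{\rho}{2(m+n)}$; there the product contains all $q\geq 1$ contraction terms, which are discarded by positivity, the $\rho$-constraint is made redundant by the small cutoff, and symmetrization is handled because the factorized lower bound (same cutoff and same power for every variable) is already completely symmetric in $k_1^{2M}$ and $\tilde{k}_1^{2N}$. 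Your pure-creation construction instead forces $n+q=0$ in \eqref{eq: w3MN} structurally, so that the combinatorial factors are literally all equal to $1$ and no contraction integral ever appears --- which makes the paper's remark that the divergence occurs ``even without the growth of the combinatorial factors'' fully literal rather than an artifact of dropping positive terms --- while the disjoint dyadic shells $A_n$ isolate the single $m=n$ summand on the permutation-invariant set $A_n^{2n}$, where the kernel is constant and symmetrization is manifestly free; the identity $\phi_n^{2p}\mu_n^{2n}=(\Vert w_{n,0}\Vert_{\lambda,p}^p)^2$ then gives the clean lower bound with no loss, for every $\lambda\in\mathbb{R}$ and verbatim for $p=\infty$. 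One small caveat on your closing discussion: the exponential losses you attribute to the ``naive route'' concern the simplex-supported profiles of Theorem \ref{thm: keineSubmulti}, but the paper's actual proof of Theorem \ref{thm: fastgrowing} also avoids them, achieving the same symmetrization-for-free by the device of the $\tfrac{\rho}{2(m+n)}$ cutoff rather than by disjoint supports; so both routes are lossless, and the trade-off is mainly aesthetic --- your example is structurally simpler (no contractions to discard), while the paper's example shows the failure persists even for elements whose product genuinely involves Wick contractions.
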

\begin{bem}
This includes the case  $D(M)=M!$. Therefore by Lemma \ref{lem: BeispielFakultaet} we already know, that we cannot use the exact same counterexample as in Theorem \ref{thm: keineSubmulti}.
\end{bem}

\begin{proof}
For $p < \infty$ we define
\begin{equation}
    \Tilde{w}_{m,n}[k_1^m; \Tilde{k}_1^n] = \prod_{i=1}^m \vert k_i\vert^{\lambda/p} \mathbf{1}\left[\vert k_i\vert \leq \tfrac{\rho}{2(m+n)}\right] \prod_{j=1}^n \vert \Tilde{k}_j\vert^{\lambda/p} \mathbf{1}\left[\vert \Tilde{k}_j\vert \leq \tfrac{\rho}{2(m+n)}\right],
\end{equation}
whereas for $p = \infty$ we set 
\begin{equation}
    \Tilde{w}_{m,n}[k_1^m; \Tilde{k}_1^n] = \prod_{i=1}^m \vert k_i\vert^{\lambda} \mathbf{1}\left[\vert k_i\vert \leq \tfrac{\rho}{2(m+n)}\right] \prod_{j=1}^n \vert \Tilde{k}_j\vert^{\lambda} \mathbf{1}\left[\vert \Tilde{k}_j\vert \leq \tfrac{\rho}{2(m+n)}\right]
\end{equation}
for all $m,n \in \mathbb{N}$. In both cases, we have $\Vert \Tilde{w}_{m,n}\Vert_{\lambda, p} < \infty$. Therefore we are able to define
\begin{equation}
    w_{m,n}[k_1^m; \Tilde{k}_1^n] = c_{m,n} \Tilde{w}_{m,n}[k_1^m; \Tilde{k}_1^n]
\end{equation}
with $c_{m,n} = \left(\Vert \Tilde{w}_{m,n}\Vert_{\lambda, p} D(m) D(n) m^\kappa n^\kappa \right)^{-1}$ for all $p\geq 2$ and some $\kappa > 1$. In addition, we choose $\underline{v} = (v_{m,n})_{m,n \in \mathbb{N}} \coloneqq (w_{m,n})_{m,n \in \mathbb{N}}$. Clearly, both sequences are in $\mathcal{W}^D_{\lambda, p}$ because
\begin{equation}
    \begin{split}
        \Vert \underline{w}\Vert^D_{\lambda, p} = \Vert \underline{v}\Vert^D_{\lambda, p} = \sum_{M,N \in \mathbb{N}} D(M) D(N) \Vert v_{m,n}\Vert_{\lambda, p} = \left(\sum_{M=1}^\infty \frac{1}{M^\kappa}\right)^2 < \infty,
    \end{split}
\end{equation}
as $\kappa > 1$. Next, let $M, N \in \mathbb{N}$. We write down the (non-symmetrized) $(2M, 2N)$ component of the Wick-ordered product
\begin{equation}
    \begin{split}
        (\underline{w}\ast_\rho \underline{v})_{2M, 2N}[k_1^{2M}; \Tilde{k}_1^{2N}] =& \sum_{m=0}^{2M} \sum_{n=0}^{2N} \sum_{q \geq 0} \binom{m+q}{q} \binom{n+q}{q} q!\\
        \quad&\int_{B_1^q} \frac{\dd x_1^q}{\vert x_1^q\vert} \mathbf{1}\left[\Sigma[x_1^q] + \Sigma[k_1^m] + \Sigma[\Tilde{k}_1^n] < \rho \right]\\
        \quad\quad & w_{2M - m, q+n}[k_{m+1}^{2M}; x_1^q, \Tilde{k}_1^n] v_{m+q, 2N - n}[k_1^m, x_1^q; \Tilde{k}_{n+1}^{2N}].
    \end{split}
\end{equation}
Again, all summands are positive, so we can estimate the absolute value of the left hand side by the summand given by $m=M, n=N$ and $q=0$
\begin{equation}
    \begin{split}
        \left\vert(\underline{w}\ast_\rho \underline{v})_{2M, 2N}[k_1^{2M}; \Tilde{k}_1^{2N}]\right\vert &\geq w_{M, N}[k_{M+1}^{2M}; \Tilde{k}_1^N] v_{M, N}[k_1^M; \Tilde{k}_{N+1}^{2N}] \\
        &\quad\times \mathbf{1}\left[\Sigma[k_1^M] + \Sigma[\Tilde{k}_1^N] < \rho \right].
    \end{split}
\end{equation}
Inserting the definition of $w_{M,N}$ and $v_{M,N}$, we see that the multiplication with the characteristic function in the last line is redundant and therefore we have
\begin{equation}
    \begin{split}
        &\left\vert(\underline{w}\ast_\rho \underline{v})_{2M, 2N}[k_1^{2M}; \Tilde{k}_1^{2N}]\right\vert\\
        &\quad \geq c_{M,N}^2 \prod_{i=1}^{2M} \vert k_i\vert^{\lambda/p} \mathbf{1}\left[\vert k_i\vert \leq \tfrac{\rho}{2(M+N)}\right] \prod_{j=1}^{2N} \vert \Tilde{k}_j\vert^{\lambda/p} \mathbf{1}\left[\vert \Tilde{k}_j\vert \leq \tfrac{\rho}{2(M+N)}\right],
    \end{split}
\end{equation}
for $p< \infty$, and
\begin{equation}
    \begin{split}
        &\left\vert(\underline{w}\ast_\rho \underline{v})_{2M, 2N}[k_1^{2M}; \Tilde{k}_1^{2N}]\right\vert\\
        &\quad \geq c_{M,N}^2 \prod_{i=1}^{2M} \vert k_i\vert^{\lambda} \mathbf{1}\left[\vert k_i\vert \leq \tfrac{\rho}{2(M+N)}\right] \prod_{j=1}^{2N} \vert \Tilde{k}_j\vert^{\lambda} \mathbf{1}\left[\vert \Tilde{k}_j\vert \leq \tfrac{\rho}{2(M+N)}\right],
    \end{split}
\end{equation}
for $p = \infty$. In both cases, the right hand side is completely symmetric in $k_1^{2M}$ and $\Tilde{k}_1^{2N}$ and we estimate
\begin{equation}
    \left\vert(\underline{w}\ast_\rho \underline{v})^{sym}_{2M, 2N}[k_1^{2M}; \Tilde{k}_1^{2N}]\right\vert \geq w_{M, N}[k_{M+1}^{2M}; \Tilde{k}_1^N] v_{M, N}[k_1^M; \Tilde{k}_{N+1}^{2N}].
\end{equation}
Now, we estimate the $\Vert \cdot \Vert_{\lambda, p}$-norm and get
\begin{equation}
    \Vert (\underline{w}\ast_\rho \underline{v})^{sym}_{2M, 2N}\Vert_{\lambda, p} \geq \Vert w_{M, N}\Vert_{\lambda, p} \Vert v_{M, N} \Vert_{\lambda, p},
\end{equation}
which implies
\begin{equation}
\begin{split}
    \Vert (\underline{w}\ast_\rho \underline{v})^{sym}\Vert^D_{\lambda, p} \geq& \sum_{M=1}^\infty \sum_{N=1}^\infty D(2M) D(2N) \Vert (\underline{w}\ast_\rho \underline{v})^{sym}_{2M, 2N}\Vert_{\lambda, p}\\
    \geq& \sum_{M=1}^\infty \sum_{N=1}^\infty D(2M) D(2N) \Vert w_{M, N}\Vert_{\lambda, p} \Vert v_{M, N} \Vert_{\lambda, p} \\
    =& \left(\sum_{M=1}^\infty \frac{D(2M)}{D(M)^2 M^{2\kappa}}\right)^2.
\end{split}
\end{equation}
But the last term diverges due to Lemma \ref{lem: Summe}.
\end{proof}

\section*{Acknowledgments}
This work emerged from a master thesis project under the supervision of V. Bach whom I am grateful to. Furthermore, I thank M. Ballesteros for helpful discussions.

\appendix
\section{Proof of Lemma \ref{lem: GammaIntegrale}}\label{sec: proofgammaint}
\begin{lem}
Let $M, m \in \mathbb{N}$, $M \geq 1$, $m \leq M$, $\rho > 0$, $x > 0$ and $y \geq 0$. Moreover, let
\begin{align}
    A(M, m, \rho, x, y) = \int_{(\mathbb{R}_0^+)^M} \limits \left(\prod_{i = 1}^M \frac{\dd s_i}{s_i} s_i^x \right) \mathbf{1}\left[\sum_{i=1}^M\limits s_i \leq \rho \right] \left(\rho - \sum_{j=1}^m \limits s_j \right)^y,
\end{align}
then we have
\begin{align}
     A(M, m, \rho, x, y) = \frac{\rho^{Mx + y} \cdot \Gamma\big[x\big]^M \Gamma\big[(M-m)x + y + 1\big]}{\Gamma\big[Mx + y + 1\big] \Gamma\big[(M-m)x + 1\big]}. \label{eq: GammaintegralVzwei}
\end{align}
\end{lem}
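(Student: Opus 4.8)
The plan is to reduce the statement to the classical Dirichlet--Liouville integral over a simplex and then bookkeep the resulting Gamma factors. First I would eliminate $\rho$ by the scaling substitution $s_i = \rho t_i$. Because $\tfrac{\dd s_i}{s_i}s_i^{x} = s_i^{x-1}\dd s_i = \rho^{x} t_i^{x-1}\dd t_i$, the constraint $\sum_i s_i \le \rho$ becomes $\sum_i t_i \le 1$ and the slack factor becomes $(\rho - \sum_{j=1}^m s_j)^y = \rho^{y}(1-\sum_{j=1}^m t_j)^y$, so that
\[
A(M,m,\rho,x,y) = \rho^{Mx+y}\,A(M,m,1,x,y).
\]
Hence it suffices to prove the formula for $\rho = 1$.

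The single identity I would rely on is the Dirichlet--Liouville integral with a power on the slack variable: for $k \ge 0$, $R>0$, $x>0$ and $z>0$,
\[
\int_{\substack{u_1,\dots,u_k\ge 0\\ \sum_i u_i \le R}} \prod_{i=1}^{k} u_i^{x-1}\,\Bigl(R-\sum_{i=1}^k u_i\Bigr)^{z-1}\,\dd u = R^{\,kx+z-1}\,\frac{\Gamma[x]^{k}\,\Gamma[z]}{\Gamma[kx+z]}.
\]
I would prove this by induction on $k$. The base case $k=0$ is trivial, and $k=1$ is the Beta integral $\int_0^R u^{x-1}(R-u)^{z-1}\dd u = R^{x+z-1}\Gamma[x]\Gamma[z]/\Gamma[x+z]$. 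For the step, integrating out $u_k$ over $[0,\,R-\sum_{i<k}u_i]$ is again a Beta integral of the same shape, with effective radius $R-\sum_{i<k}u_i$, which collapses the $k$-variable case to the $(k-1)$-variable case with $z$ replaced by $x+z$; the telescoping of $\Gamma[z]/\Gamma[x+z]$, $\Gamma[x+z]/\Gamma[2x+z]$, $\dots$ yields the claimed closed form.

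To finish I would split the $M$ variables (with $\rho=1$) into the first block $t_1,\dots,t_m$, on which the slack factor $(1-\sum_{j=1}^m t_j)^y$ depends, and the second block $t_{m+1},\dots,t_M$, on which it does not, and integrate the second block first. Writing $S=\sum_{j=1}^m t_j$, the inner integral is the identity above with $k=M-m$, $R=1-S$ and $z=1$, contributing $(1-S)^{(M-m)x}\,\Gamma[x]^{M-m}/\Gamma[(M-m)x+1]$. The leftover integral over the first block is then the same identity with $k=m$, $R=1$ and slack exponent $y+(M-m)x$, i.e. $z=(M-m)x+y+1$, contributing $\Gamma[x]^{m}\,\Gamma[(M-m)x+y+1]/\Gamma[Mx+y+1]$. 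Multiplying the two contributions and using $\Gamma[x]^{M-m}\Gamma[x]^{m}=\Gamma[x]^{M}$ gives $A(M,m,1,x,y)$, and restoring the factor $\rho^{Mx+y}$ produces exactly \eqref{eq: GammaintegralVzwei}.

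There is no serious obstacle here; the argument is pure Gamma-function bookkeeping once the generalized Dirichlet integral is in hand. The only points that need a moment of care are the boundary cases, which the formula handles through the conventions $\Gamma[1]=1$ and the empty product being $1$: for $m=M$ the second block is empty, so its contribution is $1$, while for $m=0$ the first block is empty and the whole integral reduces to a single application of the Dirichlet identity with slack exponent $y$.
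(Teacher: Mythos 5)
Your proof is correct, and it reaches the result by a genuinely different organization than the paper, although both ultimately run on the same engine: iterated Beta integrals produced by rescaling onto the unit simplex. After the identical reduction to $\rho = 1$, the paper works directly with $A$ itself, peeling off the slack variables one at a time via the substitution $(1-s_m)\tau_i = s_i$ to obtain the recursion $A(M,m,1,x,y) = B[x,(M-1)x+y+1]\,A(M-1,m-1,1,x,y)$; it must then run a second, separate recursion for the slack-free integrals $A(N,0,1,x,y)$, treat the cases $m=M$ and $m<M$ separately, and telescope the resulting Gamma ratios by hand. You instead isolate the classical Dirichlet--Liouville integral with a power on the slack as a single standalone lemma (proved by the same one-variable Beta recursion the paper uses inline) and apply it twice through a Fubini block split: first over the slack-free block $t_{m+1},\dots,t_M$ at fixed $S=\sum_{j=1}^m t_j$ with $z=1$, yielding $(1-S)^{(M-m)x}\,\Gamma[x]^{M-m}/\Gamma[(M-m)x+1]$, then over the slack block with $z=(M-m)x+y+1$; both parameter choices are legitimate since $x>0$ and $y\ge 0$ guarantee $z>0$. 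Your version buys modularity, shorter bookkeeping, a connection to a known classical identity, and a uniform treatment of the boundary cases $m=0$ and $m=M$ as empty blocks, at the cost of needing the slack-exponent form of the Dirichlet identity up front; the paper's version is more pedestrian but entirely self-contained and makes every telescoping step explicit.
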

\begin{proof}
First of all, we simplify the proof to the case, where $\rho =1$. For arbitrary $\rho >0$ we substitute $\rho \Tilde{s}_i = s_i$ for all $i \in \{1, ..., M \}$
\begin{align}
\begin{split}
    A(M, m, \rho, x, y) =& \int_{(\mathbb{R}_0^+)^M} \limits \left(\prod_{i = 1}^M \frac{\dd s_i}{s_i} s_i^x \right) \mathbf{1}\left[\sum_{i=1}^M\limits s_i \leq \rho \right] \left(\rho - \sum_{j=1}^m \limits s_j \right)^y \\
    =& \rho^{Mx + y} \int_{(\mathbb{R}_0^+)^M} \limits \left(\prod_{i = 1}^M \frac{\dd \tilde{s}_i}{\tilde{s}_i} \tilde{s}_i^x \right) \mathbf{1}\left[\sum_{i=1}^M\limits \tilde{s}_i \leq 1 \right] \left(1 - \sum_{j=1}^m \limits \tilde{s}_j \right)^y \\
    =& \rho^{Mx + y} A(M, m, 1, x, y).
\end{split}
\end{align}
Next, we do the integration of the $s_j$ for $j \in \{1, ..., m\}$ in order to get a recursive formula of $A(M,m, 1, x, y)$
\begin{align}
\begin{split}
    A(M, m, 1, x, y) =& \int_{(\mathbb{R}_0^+)^M} \limits \left(\prod_{i = 1}^M \frac{\dd s_i}{s_i} s_i^x \right) \mathbf{1}\left[\sum_{i=1}^M\limits s_i \leq 1 \right] \left(1 - \sum_{j=1}^m \limits s_j \right)^y \\
    =& \int_0^1 \dd s_m\, s_m^{x-1} \int_{(\mathbb{R}_0^+)^{M-1}} \limits \left(\prod_{\substack{i = 1\\ i\neq m}}^M \frac{\dd s_i}{s_i} s_i^x \right) \mathbf{1}\left[\sum_{\substack{i = 1\\ i\neq m}}^M\limits s_i \leq 1 - s_m \right]\\
    & \left(1 - s_m - \sum_{j=1}^{m-1} \limits s_j \right)^y.
\end{split}
\end{align}
For all $i \in \{1, ..., M\}\setminus \{m\}$ we substitute $(1-s_m)\tau_i = s_i$ such that
\begin{align}
\begin{split}
    A(M, m, 1, x, y) =& \int_0^1 \dd s_m\, s_m^{x-1} (1-s_m)^{(M-1)x + y} \\
     & \int_{(\mathbb{R}_0^+)^{M-1}} \limits \left(\prod_{\substack{i = 1\\ i\neq m}}^M \frac{\dd \tau_i}{\tau_i} \tau_i^x \right) \mathbf{1}\left[\sum_{\substack{i = 1\\ i\neq m}}^M\limits \tau_i \leq 1 \right] \left(1 - \sum_{j=1}^{m-1} \limits \tau_j \right)^y \\
     =& B[x, (M-1)x + y + 1]\cdot A(M-1, m-1, 1, x, y),
\end{split}
\end{align}
where
\begin{align}
    B[x, y] = \int_0^1 \limits \dd s\, s^{x-1} (1-s)^{y-1}
\end{align}
denotes the Beta function. An Iteration yields
\begin{align}
    A(M, m, 1, x, y) = \left(\prod_{i = 1}^m\limits B[x, (M-i)x + y + 1] \right) \cdot A(M-m, 0, 1, x, y), \label{eq: ersteIteration}
\end{align}
for $m < M$, and
\begin{align}
\begin{split}
    A(M, M, 1, x, y) =& \left(\prod_{i = 1}^{M-1}\limits B[x, (M-i)x + y + 1] \right) \cdot A(1, 1, 1, x, y) \\
    =& \left(\prod_{i = 1}^{M-1}\limits B[x, (M-i)x + y + 1] \right) \cdot \int_0^1 \limits \dd s \, s^{x-1} (1-s)^y \\
    =& \left(\prod_{i = 1}^{M}\limits B[x, (M-i)x + y + 1] \right),
\end{split}
\end{align}
otherwise. In the second case all Integrals are solved and by the identity
\begin{align}
    B[x,y] = \frac{\Gamma[x]\Gamma[y]}{\Gamma[x+y]} \label{eq: BetaGammarelation}
\end{align}
we get
\begin{align}
\begin{split}
    A(M, M, 1, x, y) =& \left(\prod_{i = 1}^{M}\limits \frac{\Gamma[x]\Gamma[(M-i)x + y + 1]}{\Gamma[(M+1-i)x + y + 1]} \right) \\
    =& \frac{\Gamma[x]^M \Gamma[y + 1]}{\Gamma[Mx + y + 1]},
\end{split}
\end{align}
in accordance with \eqref{eq: GammaintegralVzwei}. In the first case let $N \geq 1$, then
\begin{align}
\begin{split}
    A(N, 0, 1, x, y) =& \int_{(\mathbb{R}_0^+)^N} \limits \left(\prod_{i = 1}^N \frac{\dd s_i}{s_i} s_i^x \right) \mathbf{1}\left[\sum_{i=1}^N\limits s_i \leq 1 \right] \\
    =& \int_0^1 \dd s_N\, s_N^{x-1} \int_{(\mathbb{R}_0^+)^{N-1}} \limits \left(\prod_{i = 1}^{N-1} \frac{\dd s_i}{s_i} s_i^x \right) \mathbf{1}\left[\sum_{i=1}^{N-1}\limits s_i \leq 1 - s_N \right].
\end{split}
\end{align}
Once more, we substitute $(1-s_N) \tau_i = s_i$ for all $i \in \{1, ..., N-1\}$ such that
\begin{align}
\begin{split}
    A(N, 0, 1, x, y) =& \int_0^1 \dd s_N\, s_N^{x-1} (1- s_N)^{(N-1)x} \\
    & \qquad \int_{(\mathbb{R}_0^+)^{N-1}} \limits \left(\prod_{i = 1}^{N-1} \frac{\dd \tau_i}{\tau_i} \tau_i^x \right) \mathbf{1}\left[\sum_{i=1}^{N-1}\limits \tau_i \leq 1\right] \\
    =& B[x, (N-1)x + 1] A(N-1, 0, 1, x, y).
\end{split}
\end{align}
An iteration yields
\begin{align}
\begin{split}
    A(N, 0, 1, x, y) =& \left(\prod_{i = 1}^{N-1}\limits B[x, (N-i)x + 1] \right) A(1, 0, 1, x, y) \\
    =& \left(\prod_{i = 1}^{N-1}\limits B[x, (N-i)x + 1] \right) \int_0^1\limits \dd s\, s^{x-1} \\
    =& \left(\prod_{i = 1}^{N-1}\limits B[x, (N-i)x + 1] \right) \cdot \frac{1}{x}.
\end{split}
\end{align}
Inserting this relation into equation \eqref{eq: ersteIteration} gives us
\begin{align}
\begin{split}
    A(M, m, 1, x, y) =& \left(\prod_{i = 1}^m\limits B[x, (M-i)x + y + 1] \right) \cdot A(M-m, 0, 1, x, y) \\
    =& \left(\prod_{i = 1}^m\limits B[x, (M-i)x + y + 1] \right) \cdot \frac{1}{x} \\
    & \left(\prod_{j = 1}^{M-m-1}\limits B[x, (M-m-j)x + 1] \right).
\end{split}
\end{align}
Finally, we simplify this expression by using equation \eqref{eq: BetaGammarelation} and $\Gamma[x+1] = x \Gamma[x]$, such that
\begin{align}
\begin{split}
    A(M, m, 1, x, y) =& \left(\prod_{i = 1}^m\limits \frac{\Gamma\big[x\big]\Gamma\big[(M-i)x + y + 1\big]}{\Gamma\big[(M+1 - i)x + y + 1\big]} \right) \cdot \frac{1}{x} \\
    & \left(\prod_{j = 1}^{M-m-1}\limits \frac{\Gamma\big[x\big]\Gamma\big[(M-m-j)x + 1\big]}{\Gamma\big[(M+1 - m - j)x + 1\big]} \right) \\
    =& \frac{\Gamma[x]^{M-1}}{x} \frac{\Gamma\big[(M-m)x + y + 1\big]}{\Gamma\big[Mx + y + 1\big]} \frac{\Gamma\big[x + 1\big]}{\Gamma\big[(M-m)x + 1\big]}\\
    =& \frac{\Gamma\big[x\big]^M \Gamma\big[(M-m)x + y + 1\big]}{\Gamma\big[Mx + y + 1\big] \Gamma\big[(M-m)x + 1\big]}.
\end{split}
\end{align}
\end{proof}

\section{Proof of Lemma \ref{lem: Summe}}\label{sec: proofsumme}
\begin{lem}
Let $(a_n)_{n\in\mathbb{N}} \in (\mathbb{R}_{>0})^{\mathbb{N}}$ be a sequence, such that
\begin{equation}
    \lim_{n \to \infty} \frac{a_n}{a_{n+1}} = 0.\label{eq: Nullfolge}
\end{equation}
Then, the sequence
\begin{equation}
    b_n \coloneqq \frac{a_{2n}}{(a_n)^2 n^\kappa} \quad, \quad \forall n \in \mathbb{N}
\end{equation}
is unbounded for all $\kappa \geq 0$.
\end{lem}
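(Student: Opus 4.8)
The plan is to argue by contradiction: assume that $(b_n)$ is bounded, say $b_n \le C$ for all $n$, and derive a contradiction with the hypothesis $a_n/a_{n+1}\to 0$. Since all $a_n>0$, I would first pass to logarithms. Writing $f(n) := \log a_n$, the hypothesis becomes $f(n+1)-f(n)\to +\infty$, while the boundedness assumption $b_n\le C$ reads $f(2n)-2f(n)-\kappa\log n\le \log C$ for all $n$. The key difficulty is that the increments $f(n+1)-f(n)$ are \emph{not} assumed monotone, so one cannot simply compare the ``upper block'' of increments on $[n,2n)$ with the ``lower block'' on $[1,n)$ termwise. I would circumvent this entirely by (i) restricting to the dyadic subsequence and (ii) invoking a Ces\`aro/Stolz argument.

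For step (i), I would evaluate the boundedness assumption only at $n=2^k$. Setting $v_k:=f(2^k)$ this yields the linear recursion $v_{k+1}\le 2v_k+\kappa(\log 2)\,k+\log C$. Dividing by $2^{k+1}$ and putting $w_k:=v_k/2^k$ turns it into $w_{k+1}\le w_k+\frac{\kappa(\log2)\,k+\log C}{2^{k+1}}$, whose correction term is summable in $k$. Hence $w_k$ is bounded above by $w_0+\sum_{j\ge0}\frac{\kappa(\log2)\,j+\log C}{2^{j+1}}=:A<\infty$, that is, $f(2^k)\le A\,2^k$ for all $k$.

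For step (ii), I would show that the hypothesis forces $f(n)/n\to+\infty$, which is incompatible with the bound just obtained along $n=2^k$. This is precisely the statement that the Ces\`aro means of a sequence tending to $+\infty$ also tend to $+\infty$: since $f(j+1)-f(j)\to+\infty$, the averages $\frac{1}{n-1}\sum_{j=1}^{n-1}\bigl(f(j+1)-f(j)\bigr)=\frac{f(n)-f(1)}{n-1}$ tend to $+\infty$, so $f(n)/n\to+\infty$. In particular $f(2^k)/2^k\to\infty$, contradicting $f(2^k)\le A\,2^k$. This contradiction establishes that $(b_n)$ is unbounded.

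The main obstacle is exactly the lack of monotonicity of the increments; the whole purpose of the dyadic reduction is to replace a delicate comparison of unordered blocks of increments by the self-referential inequality $v_{k+1}\le 2v_k+(\text{lower order})$, which is insensitive to any reordering within the blocks. The $\kappa$-dependence is harmless: the term $\kappa\log n$ contributes only an $O(k)$ forcing to the recursion, which remains summable after division by $2^{k}$, so the argument is uniform in $\kappa\ge 0$.
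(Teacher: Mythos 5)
Your proof is correct and is essentially the paper's own argument transported to logarithmic coordinates: your dyadic recursion $v_{k+1}\le 2v_k+\kappa(\log 2)\,k+\log C$ is exactly the paper's induction $a_{2^n}\le \tfrac{1}{K}\left(2^\kappa a_1 K\right)^{2^n}$, and your Ces\`aro/Stolz step plays the role of the paper's choice of $\epsilon$ with $a_n/a_{n+1}\le\epsilon$ eventually, which likewise forces superexponential growth of $a_{2^k}$ and yields the same contradiction along the dyadic subsequence. (One cosmetic point: replace $C$ by $\max\{C,1\}$ so that $\log C\ge 0$, ensuring the partial sums in your bound for $w_k$ are indeed dominated by the full series defining $A$.)
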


\begin{proof}
We prove this statement by contradiction. Let the assumption \eqref{eq: Nullfolge} be true and $b_n \leq K$ for all $n \in \mathbb{N}$ and some $K \in \mathbb{R}_{>0}$. In the first step, we prove by induction that
\begin{equation}
    a_{2^n} \leq \frac{1}{K} \left(a_1 K \right)^{2^n} 2^{\kappa (2^n - n - 1)}, \quad \forall n\in \mathbb{N}. \label{eq: a2nInduktion}
\end{equation}
For $n = 1$, this means
\begin{equation}
    a_2 \leq K (a_1)^2,
\end{equation}
which follows from $b_1 \leq K$. Now, let \eqref{eq: a2nInduktion} be true for $n-1$. Then, by the boundedness of $b_n$, we have
\begin{equation}
    a_{2^n} \leq K (a_{2^{n-1}})^2 (2^{n-1})^\kappa,
\end{equation}
and inserting the induction assumption for $a_{2^{n-1}}$ yields
\begin{equation}
    \begin{split}
        a_{2^n} &\leq K \left(\frac{1}{K}(a_1 K)^{2^{n-1}} 2^{\kappa(2^{n-1} - (n-1) - 1)} \right)^2 2^{\kappa (n-1)}\\
        &= \frac{1}{K} (a_1 K)^{2^n} 2^{\kappa(2^n - 2(n-1) - 2)} 2^{\kappa (n-1)}\\
        &= \frac{1}{K} (a_1 K)^{2^n} 2^{\kappa(2^n - n - 1)}.
    \end{split}
\end{equation}
Since $\kappa \geq 0$, we estimate
\begin{equation}
    a_{2^n} \leq \frac{1}{K}(2^\kappa a_1 K)^{2^n},
\end{equation}
which is a more convenient upper bound for $a_{2^n}$. In the second step, we see that for all $\epsilon > 0$, there exists an $N_\epsilon \in \mathbb{N}$ such that, for all $n \geq N_\epsilon$, we have
\begin{equation}
    \frac{a_n}{a_{n+1}} \leq \epsilon.
\end{equation}
Therefore, we choose $\epsilon > 0$ such that $2^\kappa a_1 K < 1/\epsilon$. Moreover, there exists a constant $L \in \mathbb{R}_{>0}$ such that
\begin{equation}
    \frac{a_n}{a_{n+1}} \leq L \quad, \quad \forall n\in\mathbb{N}.
\end{equation}
Choosing $n_{\epsilon} \in \mathbb{N}$ such that $2^{n_\epsilon} \geq N_\epsilon$ we get
\begin{equation}
\begin{split}
    a_{2^{n_\epsilon + m}} &\geq \frac{a_{2^{n_\epsilon + m} - 1}}{\epsilon} \geq \ldots \geq \left(\frac{1}{\epsilon}\right)^{2^{n_\epsilon + m} - 2^{n_\epsilon}} a_{2^{n_\epsilon}} \geq \left(\frac{1}{\epsilon}\right)^{(2^{m}-1) 2^{n_\epsilon}} \frac{a_{2^{n_\epsilon} - 1}}{L}\\
    &\geq \ldots \geq \left(\frac{1}{\epsilon}\right)^{(2^{m}-1) 2^{n_\epsilon}} \frac{a_1}{L^{2^{n_\epsilon} - 1}},
\end{split}
\end{equation}
for all $m \in \mathbb{N}$. Using the lower and the upper bound of $a_{2^{n_\epsilon + m}}$ we have
\begin{equation}
    \begin{split}
        \left(\frac{1}{\epsilon}\right)^{(2^{m}-1) 2^{n_\epsilon}} \frac{a_1}{L^{2^{n_\epsilon} - 1}} &\leq a_{2^{n_\epsilon + m}} \leq \frac{1}{K}(2^\kappa a_1 K)^{2^{n_\epsilon + m}}\\
        &= \frac{1}{K} (2^\kappa a_1 K)^{(2^{m} - 1)2^{n_\epsilon}} (2^\kappa a_1 K)^{2^{n_\epsilon}},
    \end{split}
\end{equation}
for all $m \in \mathbb{N}$. Finally, this yields a contradiction because
\begin{equation}
    \begin{split}
        0 < \frac{a_1 L K}{(L2^\kappa a_1 K)^{2^{n_\epsilon}}} \leq (\epsilon 2^\kappa a_1 K)^{(2^m - 1)2^{n_\epsilon}} \rightarrow 0,
    \end{split}
\end{equation}
for $m \to \infty$, as $\epsilon 2^\kappa a_1 K < 1$. Therefore, $b_n$ is unbounded.
\end{proof}

\bibliographystyle{unsrt}
\bibliography{Quellen}

\begin{thebibliography}{1}

\bibitem{sfb}
{V. Bach, Th. Chen, J. Fröhlich, and I. M. Sigal}.
\newblock {Smooth Feshbach Map and Operator-Theoretic Renormalization Group
  Methods}.
\newblock {\em Journal of Functional Analysis}, 203:44--92, 2003.

\bibitem{Wilson}
{W. Wilson}.
\newblock {The renormalization group. Critical phenomena and the Kondo
  problem}.
\newblock {\em Reviews of modern physics}, 47:773, 1975.

\bibitem{haslerherbst}
{D. Hasler, I. Herbst}.
\newblock {Ground States in the Spin Boson Model}.
\newblock {\em Ann. Herni Poincaré}, 12:621--677, 2011.

\bibitem{RGAnalysis}
{V. Bach, J. Fröhlich, I. M. Sigal}.
\newblock {Renormalization Group Analysis of Spectral Problems in Quantum Field
  Theory}.
\newblock {\em Adv. in Math.}, 137:205--298, 1998.

\bibitem{pizzo}
{A. Pizzo}.
\newblock {One-particle (improper) states in Nelson's massless model}.
\newblock {\em Ann. H. Poincaré}, 4:439--486, 2003.

\bibitem{InfraredAlgoQED}
{V. Bach, J. Fröhlich, A. Pizzo}.
\newblock {Infrared-finite algorithms in QED: the groundstate of an atom
  interacting with the quantized radiation field}.
\newblock {\em Commun. Math. Phys.}, 264:145--165, 2006.

\bibitem{InfrarredAlgoQEDII}
{V. Bach, J. Fröhlich, A. Pizzo}.
\newblock {Infrared-finite algorithms in QED. II. The expansion of the
  groundstate of an atom interacting with the quantized radiation field.}
\newblock {\em Adv. Math.}, 220:1023--1074, 2009.

\bibitem{salmhofer}
{M. Salmhofer}.
\newblock {\em Renormalization. An Introduction}.
\newblock Springer, Berlin Heidelberg, 1999.

\end{thebibliography}

\end{document}